
\documentclass[journal, 10pt]{IEEEtran}

\usepackage{amsthm}
\usepackage{amsmath}
\usepackage{mathtools} 
\usepackage{amsfonts}
\usepackage{amssymb}
\usepackage{hyperref}
\usepackage{enumerate}
\usepackage{graphicx}
\usepackage{cite}
\usepackage{url}

\usepackage[dvipsnames]{xcolor}
\usepackage{bm} 

	\hypersetup{%
   		pdfborder = {0 0 0},  
    	colorlinks,			  
    	citecolor=black,
    	filecolor=black,
    	linkcolor=black,
    	urlcolor=cyan!50!black!90
	}

\usepackage[xindy, toc, acronyms]{glossaries} 
\newglossary[slg]{symbolslist}{syi}{syg}{Symbols} 
\makeglossaries


\newacronym{GMTI}{GMTI}{ground moving target indication}
\newacronym{STAP}{STAP}{space-time adaptive processing}
\newacronym{SAR}{SAR}{synthetic aperture radar}
\newacronym{PIRDIS}{PIRDIS}{Platform Independent Range Doppler Image Simulation}

\newacronym{MIMO}{MIMO}{multiple-input multiple-output}
\newacronym{SIMO}{SIMO}{single-input single-output}
\newacronym{AESA}{AESA}{active electronically scanned array}
\newacronym{TDM}{TDM}{time-division multiplexing}

\newacronym[longplural={cells under test}]{CUT}{CUT}{cell under test}
\newacronym{CS}{CS}{compressed sensing}
\newacronym{FFT}{FFT}{fast Fourier transform}
\newacronym{DTFT}{DTFT}{discrete-time Fourier transform}

\newacronym{SNR}{SNR}{signal-to-noise ratio}
\newacronym{RCS}{RCS}{radar cross section}

\newacronym{MSE}{MSE}{mean squared error}
\newacronym{CRB}{CRB}{Cram\'er-Rao bound}

\newacronym{BMSE}{BMSE}{Bayesian mean squared error}
\newacronym{BCRB}{BCRB}{Bayesian Cramer-Rao bound}
\newacronym{ZZB}{ZZB}{Ziv-Zakai bound}
\newacronym{WWB}{WWB}{Weiss-Weinstein bound}
\newacronym{BZB}{BZB}{ Bobrovsky-Zaka\"i bound}

\newacronym{RP}{RP}{random-phase}
\newacronym{KP}{KP}{known-phase}
\newacronym{UC}{UC}{unconditional}

\newacronym{PRI}{PRI}{pulse repetition interval}
\newacronym{PRF}{PRF}{pulse repetition frequency}
\newacronym{CPI}{CPI}{coherent processing interval}
\newacronym{DoA}{DoA}{direction of arrival}
\newacronym{FoV}{FoV}{field of view}


\newcommand{\expec}[1]{\mathbb{E}_{#1}}

\newcommand{\Gaussian}[1]{\mathcal{N}_{#1}}
\newcommand{\Uniform}[1]{\mathcal{U}({#1})}

\makeatletter  
\def\pd{\@ifnextchar[{\@pdwith}{\@pdwithout}}
\def\@pdwith[#1]#2{p(#1 \,\vert\, #2)}
\def\@pdwithout#1{p(#1)}
\makeatother

\newcommand{\obs}{\bm{x}}
\newcommand{\Obsdom}{\Omega}

\newcommand{\numobs}{N}

\makeatletter  
\def\pargen{\@ifnextchar[{\@pargenwith}{\@pargenwithout}} 
\def\@pargenwith[#1]{\theta_{#1}}
\def\@pargenwithout{\bm{\theta}}
\def\Pargen{\@ifnextchar[{\@Pargenwith}{\@Pargenwithout}} 
\def\@Pargenwith[#1]{\Theta_{#1}}
\def\@Pargenwithout{\Theta}
\makeatother

\newcommand{\pargenh}{\hat{\bm{\theta}}}
\newcommand{\numpar}{q}

\newcommand{\sensingpol}{\bm{g}}
\newcommand{\gall}[1]{\sensingpol^{(#1)}}
\newcommand{\gone}[1]{\sensingpol_{#1}}

\newcommand{\sv}{\bm{a}}

\newcommand{\gmodel}{\bm{a}}
\newcommand{\map}[3]{#1:#2 \rightarrow #3}

\newcommand{\ramp}{r}
\newcommand{\costk}{\cost_k}

\newcommand{\until}[1]{\{1,\dots,#1\}}

\newcommand{\pargenkm}{\pargen_{k-1}}
\newcommand{\pargenk}{\pargen_k}

\newcommand{\obskmall}{\obs^{(k-1)}}
\newcommand{\obskall}{\obs^{(k)}}

\newcommand{\obsk}{\obs_{k}}

\newcommand{\pplusatkm}{p^+_{k-1}}

\newcommand{\pplusatk}{p^+_{k}}
\newcommand{\pminusatk}{p^-_{k\textsl{}}}

\newcommand{\pminus}{p^{-}}
\newcommand{\pplus}{p^{+}}

\newcommand{\pzeroplus}{p^{+}_0}

\newcommand{\sensingpolk}{\gone{k}}

\newcommand{\sensingpolkall}{\gall{k}}

\newcommand{\sensingpolkmall}{\gall{k-1}}

\newcommand{\noise}{\bm{n}}
\newcommand{\noisecov}{\bm{R}}

\makeatletter  
\def\pargenL{\@ifnextchar[{\@pargenLwith}{\@pargenLwithout}}
\def\@pargenLwith[#1]{a_{#1}}
\def\@pargenLwithout{\bm{\alpha}}
\def\pargenU{\@ifnextchar[{\@pargenUwith}{\@pargenUwithout}}
\def\@pargenUwith[#1]{b_{#1}}
\def\@pargenUwithout{\bm{\beta}}
\makeatother


\newcommand{\transitionden}{p}

\newcommand{\NP}{N_P}

\makeatletter  
\def\partPF{\@ifnextchar[{\@partPFwith}{\@partPFwithout}}
\def\@partPFwith[#1]{\bm{p}_{#1}}
\def\@partPFwithout{\bm{p}}

\def\weightPF{\@ifnextchar[{\@weightPFwith}{\@weightPFwithout}}
\def\@weightPFwith[#1]{{w_{#1}}}
\def\@weightPFwithout{\bm{w}}

\def\meanPF{\@ifnextchar[{\@meanPFwith}{\@meanPFwithout}}
\def\@meanPFwith[#1]{{\bar{p}_{#1}}}
\def\@meanPFwithout{\bm{\bar{p}}}

\def\varPF{\@ifnextchar[{\@varPFwith}{\@varPFwithout}}
\def\@varPFwith[#1]{{\hat{\sigma}^2_{#1}}}
\def\@varPFwithout{\bm{\hat{\sigma}^2}}

\makeatother

%
\newcommand{\doa}{u}
\newcommand{\amplitude}{s}
\newcommand{\dop}{\omega}
\newcommand{\velr}{v_r}
\newcommand{\numRx}{N_{\rm{Rx}}}
\newcommand{\numTx}{N_{\rm{Tx}}}
\newcommand{\svtdm}{\bm{b}}

\newcommand{\rsm}{\bm{G}^{\rm{Rx}}}
\newcommand{\tsm}{\bm{G}^{\rm{Tx}}}
\newcommand{\dRx}{\bm{d}^{\rm{Rx}}}
\newcommand{\dTx}{\bm{d}^{\rm{Tx}}}

\newcommand{\dVirt}{\bm{d}^{\rm{V}}}

\newcommand{\numRxsel}{N_{\rm{R}}}

\newcommand{\numPulse}{N_{\rm{P}}}
\newcommand{\tPulse}{\bm{t}^P}
\newcommand{\tVirt}{\bm{t}^{\rm{V}}}

\newcommand{\freq}{u}
\newcommand{\freqU}{b}
\newcommand{\freqL}{a}
\newcommand{\arr}[1]{\bm{d}_{#1}}
\newcommand{\arrex}{\bm{D}}
\newcommand{\magnitude}{\vert s\vert}
\newcommand{\phase}{\varphi}
\newcommand{\SNR}{\gamma}
\newcommand{\SNRUC}{\gamma'}
\newcommand{\noisevar}{\sigma^2}
\newcommand{\scaling}{g}
\newcommand{\BPh}{h}

\newcommand{\Freq}{\bm{\freq}}
\newcommand{\Freqmean}{\bm{\mu_{\freq}}}
\newcommand{\Freqvar}{\bm{\Sigma_{\freq}}}
\newcommand{\Phase}{\Phi}
\newcommand{\Phaseis}{\tilde{\Phi}}
\newcommand{\BC}{\operatorname{BC}}
\newcommand{\intprior}{\xi}
\newcommand{\tpDom}{\mathcal{H}_{\Pargen}}

\newcommand{\freqhat}{\hat{\freq}}
\newcommand{\emse}{\mathbf{\rm{MSE}}}
\newcommand{\Ntrial}{N_{T}}

\makeatletter  
\def\arrI{\@ifnextchar[{\@arrIwith}{\@arrIwithout}} 
\def\@arrIwith[#1]{d_{#1}}
\def\@arrIwithout{\bm{d}}
\makeatother

\newcommand{\arrICM}{\hat{d}}

\newcommand{\bmse}{{\rm{\mathbf{BMSE}}}}

\newcommand{\FIM}{{\rm{\mathbf{FIM}}}}

\newcommand{\cost}{{\mathcal{C}}}


\makeatletter  
\def\weighting{\@ifnextchar[{\@weightingwith}{\@weightingwithout}} 
\def\@weightingwith[#1]{\rho_{#1}}
\def\@weightingwithout{\bm{\rho}}
\makeatother
\newcommand{\wwb}{ {\rm{\mathbf{WWB}}}}

\newcommand{\tpH}{\bm{H}}
\newcommand{\err}{\bm{e}}

\newcommand{\tph}[1]{\bm{h}_{#1}}

\newcommand{\Q}{\bm{Q}}
\newcommand{\QI}{Q}
\newcommand{\tpgenI}{\bm{v}}
\newcommand{\tpgenII}{\tilde{\bm{v}}}
\newcommand{\USNR}{\kappa}

\makeatletter  
\def\Pargenis{\@ifnextchar[{\@Pargeniswith}{\@Pargeniswithout}} 
\def\@Pargeniswith[#1]{\tilde{\Theta}_{#1}}
\def\@Pargeniswithout{\tilde{\Theta}}

\def\tphI{\@ifnextchar[{\@tphIwith}{\@tphIwithout}} 
\def\@tphIwith[#1]{h_{#1}}
\def\@tphIwithout{\bm{h}}
\makeatother

\newcommand{\repart}[1]{\operatorname{Re}\{#1\}}
\newcommand{\diag}{\operatorname{diag}}
\newcommand{\define}{:=}
\newcommand{\enifed}{=:}

\newcommand{\leqm}{\preceq}

\newcommand{\traceW}{\operatorname{trace}_{\weighting}}
\newcommand{\setdef}[2]{\{#1 \, : \, #2\}}
\newcommand{\ones}[1]{\bm{1}_{#1}}
\newcommand{\eye}[1]{\bm{I}_{#1}}
\newcommand{\vol}[1]{\vert #1\vert}
\newcommand{\cf}[1]{{\chi}_{#1}}

\newcommand{\R}{{\mathbb{R}}}
\newcommand{\Rpos}{\mathbb{R}_{\ge 0}}
\newcommand{\C}{{\mathbb{C}}}
{}
{}
\newtheorem{proposition}{Proposition}{}
\newtheorem{remark}{Remark}{}
\newtheorem{lemma}{Lemma}{}
\newtheorem{algorithm}{Algorithm}{}


\newcommand{\margin}[1]{\marginpar{\color{blue}\tiny\ttfamily#1}}

\newcommand{\cg}[1]{}




\begin{document}

\IEEEspecialpapernotice{\small{
This paper is a preprint of a paper accepted by IET Radar, Sonar \& Navigation 
 and is subject to Institution of Engineering and Technology Copyright. 
 When the final version is published, the copy of record will be available at the IET Digital Library	
	}}

\title{Adaptive transmission for radar arrays\\ using Weiss-Weinstein bounds}

\author{Christian Greiff, David Mateos-N\'u\~nez,
		 Mar\'ia A. Gonz\'alez-Huici, Stefan Br\"uggenwirth
\thanks{The authors are with the Department of Cognitive Radar, Fraunhofer FHR, Wachtberg, Germany. Email: christian.greiff@fhr.fraunhofer.de}
}


\maketitle

\begin{abstract}
We present an algorithm for adaptive selection of pulse repetition frequency or antenna activations for Doppler and DoA estimation. The adaptation is performed sequentially using a Bayesian filter, responsible for updating the belief on parameters, and a controller, responsible for selecting transmission variables for the next measurement by optimizing a prediction of the estimation error. This selection optimizes the Weiss-Weinstein bound for a multi-dimensional frequency estimation model based on array measurements of a narrow-band far-field source. A particle filter implements the update of the posterior distribution after each new measurement is taken, and this posterior is further approximated by a Gaussian or a uniform distribution for which computationally fast expressions of the Weiss-Weinstein bound are analytically derived.
We characterize the controller's optimal choices in terms of SNR and variance of the current belief, discussing their properties in terms of the ambiguity function and comparing them with optimal choices of other  Weiss-Weinstein bound constructions in the literature. The resulting algorithms are analyzed in simulations where we showcase a practically feasible real-time evaluation based on look-up tables or small neural networks trained off-line.
\end{abstract}

\section{Introduction}\label{sec:Introduction}

Software-defined radar systems offer degrees of freedom such as waveform diversity, beam-steering, or antenna selection. 
The concept of Cognitive Radar \cite{SH:12} describes a dynamic systems approach for the control of these degrees of freedom in a real-time, closed-loop fashion, 
motivating research on waveform design~\cite{gini2012waveform, wicks2011principles, blunt2016overview}, matched illumination~\cite{romero2011theory,goodman2007adaptive},
radar resource management~\cite{charlish2015phased,ding2008survey}, or spectral coexistence~\cite{aubry2014cognitive, stinco2016spectrum}. Application areas include multi-functional \gls{AESA} systems in airborne or maritime scenarios~\cite{guerci2014cofar, smits2008cognitive}, automotive \gls{MIMO}-radars~\cite{WH-JT-RS:13,OI-JT-IB:15, JT-OI-IB:16} 
 or distributed sensor networks~\cite{chavali2012scheduling}.

The \textit{perception-action} cycle~\cite{SH:06, KLB-CJB-GES-JTJ-MR:15b} describes a sequential process of extracting information from a scene and using that knowledge for adapting the transmission and processing of subsequent measurements. These are tasks of estimation and control that can be formulated in a Bayesian estimation framework~\cite{HLVT-KLB:13},~\cite{SH:12}. Sequential estimation is performed by updating a belief distribution over the parameter of interest according to motion and measurement models, which is suitable for tracking and Track-Before-Detect approaches. Practical implementations such as Particle Filters \cite{MSA-SM-NG-TC:02} or Cubature Kalman Filters~\cite{SH:12} can handle nonlinear models crucial for radar systems. On the control side, the estimation performance can be predicted and optimized
using Bayesian bounds~\cite{HLVT-KLB:13} that provide a lower bound on the expectation over the current belief of the covariance matrix of the error of any estimator. Classical examples include the \gls{BCRB}, and other members of the family of \glslink{WWB}{Weiss-Weinstein bounds (WWB)}~\cite{AW-EW:85,AR-PF-PL-CDR-AN:08}, including the \gls{BZB}. The latter, together with the \gls{ZZB}~\cite{DK-KLB:10}, take into account estimation errors in nonlinear estimation problems that occur below a certain \gls{SNR} due to sidelobes in the ambiguity function~\cite{NDT-AR-RB-SM-PL:12}. These are underestimated by the Cram\'er-Rao bound (CRB), which is related to the mainlobe-width and therefore measures the accuracy and resolution limit when the \gls{SNR} is large. Instances of such nonlinear estimation problems include frequency estimation in radar systems such as the estimation of Doppler frequency and \gls{DoA}.  These scenarios have been the context for adaptive strategies for the selection of pulse repetition frequency~\cite{KLB-JTJ-GES-CJB-MR:15a} using the \gls{BCRB}, and for antenna selection using the \gls{BZB}~\cite{OI-JT-IB:15,JT-OI-IB:16} and \gls{WWB}~\cite{DMN-MGH-RS-SB:17}.

The optimization metrics prescribed by the WWB are themselves an optimization over so-called test points of an expression that contains integrals of the measurement and prior distributions. These probability models affect the characterization of optimal sensing parameters, and also the existence of analytical expressions for the integrals. 
In the context of sequential estimation of a dynamic Markovian parameter, these metrics can be constructed to bound the \gls{BMSE} given all previous measurements. 
In this case, the motion and measurement updates can be embedded in a sequential computation of the \gls{WWB} under an ample class of dynamics~\cite{IR-YO:07}, resulting in explicit formulas for some families of prior and measurement distributions~\cite{FX-PG-GM-CFM:13}.
 Alternatively, the works~\cite{WH-JT-RS:13,OI-JT-IB:15,NS-JT-HM:15} show that under exact posteriors, the Bayesian bound becomes too a conditional bound given previous measurements, and suggest approximating these posteriors with a particle filter or the Metropolis-Hastings algorithm.
In this work we use a particle filter to update the posterior, that is further approximated by a combination of Gaussian and uniform distributions for the selection of sensing parameters using the \gls{WWB}. For this purpose and those priors, we
 have derived the \gls{WWB} 
 for frequency estimation using array measurement models for a single source with known \gls{SNR} but random initial phase, with a rigorous derivation of test-point domains.  This model can be particularized to azimuth and elevation estimation, azimuth and Doppler estimation in \gls{TDM} \gls{MIMO}.
Modeling-wise, in the case of \gls{DoA} or Doppler estimation, we characterize the \gls{WWB}-based optimal selection of a scaling parameter that models the carrier frequency or the \gls{PRI}, in terms of the \gls{FoV} or variance of the prior density, 
 and compare the benefits of the proposed model with alternative \gls{WWB} constructions called \textit{conditional}, which we refer to as \textit{known-phase}, and \textit{unconditional} signal models~\cite{DTV-AR-RB-SM:14}, demonstrating the influence of modeling the initial phase as random while regarding the \gls{SNR} as deterministic and known.
Computationally, this formulation has the advantage of fast, vectorized evaluation over test-points thanks to explicit formulas without needing to evaluate the inverse of a matrix. 
We then show in simulations the closed-loop performance of the particle filter combined with the above criteria using feedback on the variance of the posterior for adaptation of \gls{PRF} or array scaling, and for antenna selection.

\paragraph*{Organization}

The paper is organized as follows:
Section \ref{sec:adaptive-sensing-framework} describes the adaptive sensing algorithm for sequential Bayesian estimation.
Section~\ref{sec:wwb-and-models-random-phase-array-processing} presents a derivation of the \gls{WWB} for general array processing tasks for a single source of known \gls{SNR} under a spatio-temporal sampling scheme with a random initial phase associated to the coherent processing interval.
Section~\ref{sec:controller-wwb} includes an analysis of the controller choices for several \gls{WWB} models and analyzes the consequences of assuming knowledge of the initial phase or lack thereof. 
Section~\ref{sec:adaptive-array-scaling} applies the general framework to the problem of adaptive \gls{PRF} or array scaling for Doppler or \gls{DoA} estimation, and to the problem of channel selection for \gls{DoA} estimation. The resulting adaptive policies are compared in simulations, exemplifying the practical implementation of our strategies with the use of look-up tables and neural nets.
Section~\ref{sec:Conclusion} discusses our conclusions and ideas for future work, and we include Appendices with auxiliary results.

\paragraph*{Notational conventions}
$\R^{\numobs}$ and $\C^\numobs$ denote the $\numobs$-dimensional real and complex vector spaces, respectively.
The real part of a complex number $z\in\C$ is $\repart{z}$, while $|z|$ stands for the absolute value. Likewise, the Euclidean volume for sets $\Pargen\subset{\R^\numpar}$ is denoted by $\vol{\Pargen} = \int \cf{\Pargen}(\pargen) d\pargen$, where $\cf{\Pargen}$ is the indicator function. 
For a symmetric real or Hermitian complex matrix $\bm{A}$, the induced norm is $\Vert \obs \Vert_{\bm{A}}\define \sqrt{\obs^H \bm{A} \obs }$ where $\obs^H$ is the conjugated transpose of vector $\obs$.
The weighted trace is defined as $\traceW (\bm{A}) \define \sum_i \weighting[i] A_{ii}$.
%
We denote by $\ones{N}\in\R^N$ the vector of ones and $\eye{N}\in\R^{\numobs\times\numobs}$ the identity matrix.
For functions, $\map{p,f}{\R^q}{\R}$, we define the expectation of $f$ with respect to the density $p$ as  $\expec{p(\pargen)}[f]=\int f(\pargen) p(\pargen) d\pargen$.
Bracketed integer superscripts serve as abbreviation for a collection of variables, e.g. $\obskall = \{\obs^1, \dots, \obs^k\}$.
\\
We adhere to the convention that lower case letters denote scalars (e.g. $\SNR\in\R$), lower case boldface letter denote vectors (e.g. $\arrI\in \R^\numobs$) and upper case boldface letters denote matrices (e.g. $\tpH\in \R^{\numpar\times M}$). 
We frequently use a Matlab inspired notation for vector evaluations of a scalar function, e.g. for a vector $\arrI = (\arrI[n])_{n = 1}^\numobs\in\R^\numobs$, the expression $e^{i\arrI} = (e^{i\arrI[n]})_{n = 1}^\numobs \in\C^{\numobs}$ also denotes a vector.

\section{Bayesian adaptive sensing for sequential estimation}\label{sec:adaptive-sensing-framework}

Here we describe a framework for sequential adaptive sensing using as feedback a belief distribution of the parameter of interest. The control system comprises a \textit{processor}, which uses a Bayesian filter to incorporate information from measurements about a parameter of interest into the belief distribution, and a \textit{controller}, which is a rule for selecting the transmission variables for the next measurement using feedback on the current knowledge. Next we describe the processor and the controller.

\subsection{Processor}
The processor is in charge of incorporating information from the latest measurement into the belief distribution of the parameter of interest.
Consider a parameter vector that at time $k$ is modeled by the random vector $\pargenk\in\R^\numobs$. 
To relate the measurement $\obsk$ at step $k$ with the parameter $\pargenk$,  we need a measurement model, $\pd[\obsk]{\pargenk, \sensingpolk}$, that depends on the sensing parameters $\sensingpolk$ used in that measurement. In Section~\ref{sec:wwb-and-models-random-phase-array-processing}, we substantiate this model focusing on multidimensional frequency estimation of a single complex sinusoid with additive Gaussian white noise where the sensing parameters are sampling schemes in time and space. We keep this section general for suitable distributions $\pd[\obsk]{\pargenk, \sensingpolk}$.

To model the evolution between measurement steps of the parameter being estimated, we assume a Markovian \textit{transition} (or state evolution) model of the form $ \pd[\pargenk]{\pargen_{k-1}, \sensingpolk}$. Note that in general there can be a dependence on the sensing parameter, e.g., if  the latter specifies the time of the measurement at step $k$. This transition probability is assumed known.

A Bayesian filter proceeds in two steps: the motion update (or prediction) and the measurement update (or filtering). The initial belief distribution for the parameter, denoted by $p(\pargen_0)=p_0(\pargen_0)$ is a modeling choice.

\paragraph*{Motion update} The motion update predicts the \textit{state} of the parameter at the time of the next measurement using the model for state evolution.
Suppose that after measurement step $k-1$, we have a belief given by $\pplusatkm(\pargen_{k-1})$. Then the motion update of the belief distribution is given by the Chapman-Kolmogorov equation \cite{MSA-SM-NG-TC:02},
\begin{align}\label{eq:motion-update-general}
		\pminusatk(\pargenk) \define \int \pd[\pargenk]{\pargenkm, \sensingpolk}  \;\pplusatkm(\pargenkm) d\pargenkm.
\end{align}
This probability is employed as predicted \textit{prior} belief at step~$k$.

\paragraph*{Measurement update} The measurement update filters the prediction using the likelihood of the measurement,
\begin{align}\label{eq:measurement-update-general}
\pplusatk(\pargenk)  &\define c\;  \pd[\obsk]{\pargenk, \sensingpolk} \;\pminusatk(\pargenk) 
\end{align}
with $c$ chosen so that $\pplusatk$ is a probability density over values of $\pargenk$, and $\pzeroplus(\pargen_0)\define p_0(\pargen_0)$.
The recurrences \eqref{eq:motion-update-general} and \eqref{eq:measurement-update-general} have properties that can depend on the policy for sensing parameters at the Controller.

\begin{figure}[]
	\hspace*{-0.4cm}
	{\includegraphics[width=1.05\linewidth]{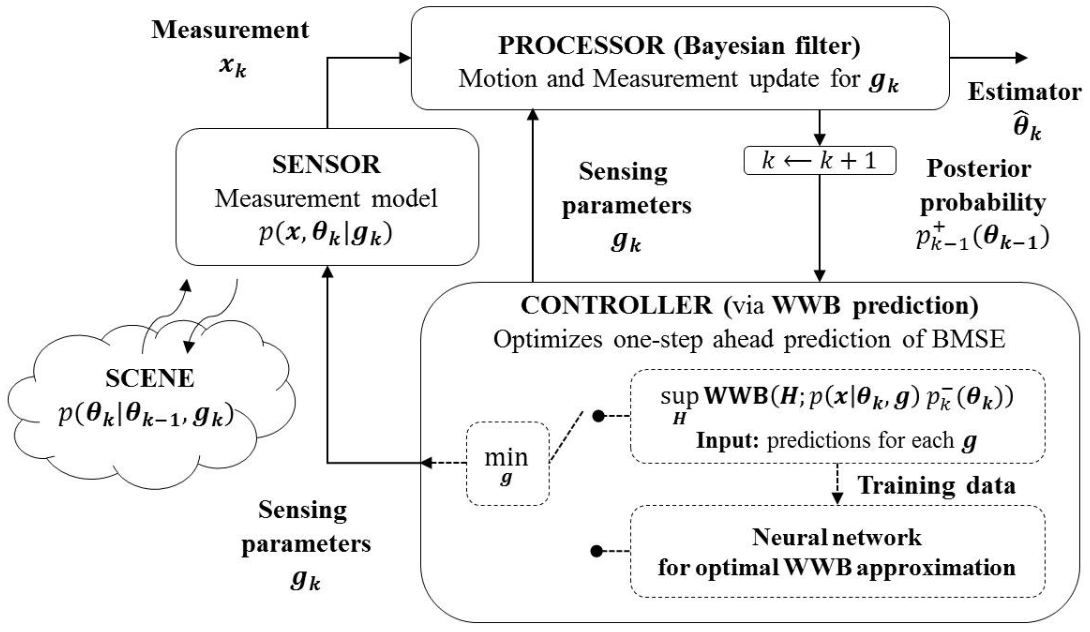}}
	\caption{Diagram of adaptive sensing based on the \gls{WWB}. This Bayesian framework is analogous to the ones in~\cite{KLB-CJB-GES-JTJ-MR:15b, DMN-MGH-RS-SB:17}, where we consider in this work the optional choice of training a neural network for \gls{WWB} ranking of candidate sensing parameters. 
}\label{fig:adaptive-policy-diagram}
\end{figure}

\subsection{Controller} 
The controller is in charge of selecting sensing parameters for the next measurement using as input the current belief distribution. That is, at step $k$ it takes as input the posterior of the last step  $\pplusatkm(\pargenkm)$ before the motion update, or approximation thereof, and returns $\sensingpolk$. Any criterion to make this selection should employ the state evolution model $\pd[\pargenk]{\pargen_{k-1}, \sensingpolk}$ and the observation model 
according to candidate sensing parameters $\pd[\obsk]{\pargenk, \sensingpolk}$.
The criterion used in this paper is a tight lower bound on the \gls{BMSE}.

For the data model $p(\obs,\pargen)$, the \glslink{BMSE}{\textit{Bayesian Mean Squared Error} (BMSE)} of an estimator $\pargenh\equiv\pargenh(\obs)$ of $\pargen$ is defined as the Bayesian covariance matrix of the error $\err \define \pargenh(\obs) - \pargen$,~i.e.,
\begin{align}\label{eq:bmse-general}
\bmse(\pargenh; \pd{\obs,\pargen})\define \expec{\pd{\obs,\pargen}} \big[(\pargenh(\obs)-\pargen)(\pargenh(\obs)-\pargen)^T\big].
\end{align}
The \gls{BMSE} in~\eqref{eq:bmse-general} can be used as optimization metric for adaptive sensing~\cite{SH-JRH-BM:18}, but is expensive to compute because it involves Monte Carlo integrals over the parameter space and over realizations of the observation. 
Instead, we follow the common practice of replacing the BMSE by one of its lower bounds, e.g. the \glslink{WWB}{\textit{Weiss-Weinstein bound} (WWB)}. 
The \gls{WWB} provides a lower bound on the \gls{BMSE} of any estimator and thus gives an indication of the achievable estimation performance. 
Formally, the bound is obtained from a covariance inequality in the sense of the Loewner order on positive semi-definite matrices~\cite[p. 333]{HLVT-KLB:13} as
\begin{align}\label{eq:ineq-bmse-wwb-general}
\wwb(\tpH;\pd{\obs,\pargen}) \leqm\bmse(\pargenh;\pd{\obs,\pargen}) ,
\end{align}
where $\wwb(\tpH;\pd{\obs,\pargen})\in \R^{\numpar\times \numpar}$ is a member of the family of \glspl{WWB} parametrized by the \textit{test point matrix} $\tpH$ for a data model $\pd{\obs,\pargen}$ as described in the Appendix \ref{sec:WWB-theory}.

Depending on the estimation task, we are interested in the contribution to the \gls{BMSE} of a subset 
of coordinates,
and thus we define the following objective function for candidate sensing parameters,
	\begin{align}\label{def:costwwb}
\costk(\sensingpol) \define \sup_{\tpH} \traceW (\wwb(\tpH; \pd[\obsk]{\pargenk,\sensingpol} \pminusatk(\pargenk))) ,
	\end{align}
where the predicted prior $ \pminusatk(\pargenk) \equiv \pminusatk(\pargenk; \sensingpol)$ depends on $\sensingpol$ and is obtained from the input $\pplusatkm(\pargenkm)$ through the motion update~\eqref{eq:motion-update-general}.
The weighting vector $\weighting\in\Rpos^\numpar$ can be used to balance units or weight
 the components of $\pargen$.
 Optimization over test points $\tpH$ is performed to obtain the tightest bound within the parametric family of bounds.
 The sensing parameters are then found as 
\begin{align}\label{eq:opt-sensing}
\sensingpolk = \arg \min_{\sensingpol} \costk(\sensingpol). 
\end{align}
This selection requires a double optimization procedure, first over test point matrix $\tpH$, to evaluate the prediction of the \gls{BMSE}, and then over sensing parameters. The former is non-convex and we use a global optimization algorithm (e.g. simulated annealing~\cite{SK-CDG-MPV:83}). A visualization of the closed-loop between the processor and the controller is depicted in Fig.~\ref{fig:adaptive-policy-diagram}. Algorithm \ref{alg:adaptivesensing} summarizes the steps. 
\\


\begin{algorithm}[Adaptive selection of sensing parameters]\label{alg:adaptivesensing}
	\mbox{}\\
	
	\textbf{Input:} Initial belief distribution $\pplus_0(\pargen)$;
	Measurement likelihood model $\pd[\obs]{\pargen, \sensingpol}$;
	State evolution model $p(\pargenk |\pargen_{k-1})$
	\\

	\textbf{Output:} Belief distribution $\pplusatk(\pargenk \vert \sensingpolk)$\\ 
	
	\textbf{Procedure:}
	Set $k = 1$. 
	\begin{enumerate}[\hspace{0.3cm} 1. ]
		\item \label{alg:startloop} Motion update of belief distribution via~\eqref{eq:motion-update-general} 
		to obtain $\pminusatk(\pargenk)$
		\vspace{3mm}
		\item The controller finds "optimal" sensing parameters $\sensingpolk$ by minimizing the cost function
		$\sensingpolk = \arg \min_{\sensingpol} \costk(\sensingpol)$
		\vspace{3mm}
		\item Measurement is performed, yielding observation $\obsk$
		\vspace{3mm}
		\item The processor performs the measurement update of belief distribution via~\eqref{eq:measurement-update-general} to obtain the posterior $\pplusatk(\pargenk)$
		\vspace{3mm}
		\item Start next cycle with $\pplusatk(\pargenk)$ 
		as new initial belief by increasing $k \leftarrow k+1$ and repeating from step 1
	\end{enumerate}
\end{algorithm}

\begin{remark}[Dependence of motion model on sensing parameter ]\label{rem:motionG}
	
	Note that if the motion update depends on the sensing parameter (e.g., if it refers to the time of measurement), then the controller has to perform the motion update in~\eqref{eq:motion-update-general} for each evaluation of the cost function $\costk(\sensingpol)$. For the special case of a $\sensingpol$-independent state evolution model, the motion update in~\eqref{eq:motion-update-general} needs to be performed only once before passing the resulting prediction 
	to the controller.
\end{remark}

\subsection{Legitimation of the closed-loop}

The closed-loop formed by the recurrences \eqref{eq:motion-update-general} and \eqref{eq:measurement-update-general} and the selection of sensing parameters~\eqref{eq:opt-sensing} governs the evolution of the belief distribution of the parameter.
Under this evolution, the cost function is a lower bound of the BMSE conditional to previous measurements $\obskall\define \{\obs_1, \dots, \obsk \}$~\cite{WH-JT-RS:13,OI-JT-IB:15,NS-JT-HM:15}.

\begin{proposition}[Properties of the closed-loop]\label{cor:CBMSE}
	The following relations hold for the closed-loop system formed by the Processor updates \eqref{eq:motion-update-general} and \eqref{eq:measurement-update-general}, and the Controller selection~\eqref{eq:opt-sensing}.\vspace*{-0.25cm}
\begin{itemize}	
\item[(i)] The motion and measurement updates satisfy
		\begin{subequations}\label{eq:updates-identities}
		\begin{align}
		\pminusatk(\pargenk) 
		&= \pd[\pargenk]{\obskmall, \sensingpolkall} 	
		\label{eq:updates-identities-1}
		\\
		\pplusatk(\pargenk)   
		& =  
		\pd[\pargenk]{\obskall, \sensingpolkall} 
		\label{eq:updates-identities-2}
		\\
		c 
		& = \frac{1}{ \pd[\obsk]{\obskmall,\sensingpolkall} } ,
		\label{eq:updates-identities-3}
		\end{align}
	\end{subequations}
i.e., $\pplusatk(\pargenk)$ is the \textit{posterior} belief at step $k$ conditioned 
to all measurements $\obskall$ and sensing parameters $\sensingpolkall$.

\item[(ii)] The \gls{WWB} in the cost function~\eqref{def:costwwb} satisfies the inequality
	\begin{align*}
	&\wwb(\tpH; \pd[\obsk]{\pargenk,\sensingpol}\pminusatk(\pargenk)) \\
	&
	\qquad\leqm 
	\bmse(\pargenh_k,\pd[\obsk,\pargenk]{\obskmall, \sensingpolkmall, \sensingpol}) ,
	\end{align*}
	%
	where $\pargenh_k $ is any estimator based 
	on $\obskall, \sensingpolkmall, \sensingpol$.
	Similarly for the corresponding inequality taking $\traceW$ on both sides.
\end{itemize}
\end{proposition}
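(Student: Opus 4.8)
The plan is to establish part (i) by induction on the measurement step $k$, reading the recurrences \eqref{eq:motion-update-general} and \eqref{eq:measurement-update-general} as the prediction and filtering steps of a state-space (hidden Markov) model, and then to obtain part (ii) as a one-line consequence of (i) combined with the generic covariance inequality \eqref{eq:ineq-bmse-wwb-general}. The only analytical content beyond standard Bayesian-filter algebra is the monotonicity of the weighted trace under the Loewner order.

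For part (i), I would argue by induction, with the base case $k=0$ holding because $\pzeroplus(\pargen_0)\define p_0(\pargen_0)$ coincides with $\pd[\pargen_0]{\obskmall, \sensingpolkmall}$ for the empty conditioning set. Assume the inductive hypothesis $\pplusatkm(\pargenkm) = \pd[\pargenkm]{\obskmall, \sensingpolkmall}$. Substituting into the motion update \eqref{eq:motion-update-general} gives
\[
\pminusatk(\pargenk) = \int \pd[\pargenk]{\pargenkm, \sensingpolk}\, \pd[\pargenkm]{\obskmall, \sensingpolkmall}\, d\pargenkm .
\]
Invoking the Markov property of the transition model, namely that conditioned on $\pargenkm$ and the current sensing parameter the next state is independent of the past, $\pd[\pargenk]{\pargenkm, \sensingpolk} = \pd[\pargenk]{\pargenkm, \obskmall, \sensingpolkall}$, the integrand becomes a Chapman--Kolmogorov marginalization and collapses to $\pminusatk(\pargenk) = \pd[\pargenk]{\obskmall, \sensingpolkall}$, which is \eqref{eq:updates-identities-1}. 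For the measurement update \eqref{eq:measurement-update-general} I would use the conditional independence of the observation given the current state and sensing parameter, $\pd[\obsk]{\pargenk, \sensingpolk} = \pd[\obsk]{\pargenk, \obskmall, \sensingpolkall}$, so that the product $\pd[\obsk]{\pargenk, \sensingpolk}\, \pminusatk(\pargenk)$ equals the conditional joint density $\pd[\obsk, \pargenk]{\obskmall, \sensingpolkall}$. Imposing that $\pplusatk$ integrate to one over $\pargenk$ forces $c = 1/\pd[\obsk]{\obskmall, \sensingpolkall}$, i.e.\ \eqref{eq:updates-identities-3}, and dividing yields $\pplusatk(\pargenk) = \pd[\pargenk]{\obskall, \sensingpolkall}$, which is \eqref{eq:updates-identities-2}.

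For part (ii), the key observation is that, by \eqref{eq:updates-identities-1}, the argument of the WWB in the cost \eqref{def:costwwb} is already a conditional data model, $\pd[\obsk]{\pargenk, \sensingpol}\, \pminusatk(\pargenk) = \pd[\obsk, \pargenk]{\obskmall, \sensingpolkmall, \sensingpol}$, with $\sensingpol = \sensingpolk$ the candidate sensing parameter. I would then apply the generic covariance inequality \eqref{eq:ineq-bmse-wwb-general} verbatim to this conditional model, which immediately produces the stated Loewner bound, the right-hand side being by definition the conditional BMSE of any estimator $\pargenh_k$ formed from $\obskall, \sensingpolkmall, \sensingpol$. The companion statement for $\traceW$ follows from monotonicity of the weighted trace under the Loewner order: if $\bm{A} \leqm \bm{B}$ then $B_{ii} - A_{ii} \ge 0$ for every $i$, and since $\weighting \in \Rpos^\numpar$ has nonnegative entries, $\traceW(\bm{A}) \le \traceW(\bm{B})$.

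I expect the main difficulty to be conceptual bookkeeping rather than hard analysis. Because the sensing parameters $\sensingpolkall$ are produced by the controller from the past measurements $\obskmall$ through \eqref{eq:opt-sensing}, they are statistically coupled to the data, so the reduction to a hidden-Markov factorization is not automatic. The care required is to condition throughout on the \emph{realized} sensing parameters, and to state explicitly the two structural modeling assumptions used above---Markovianity of the state transition and conditional independence of each measurement given the current state and sensing parameter---under which this closed-loop, adaptive feedback preserves the factorization the induction relies on; this is precisely the subtlety addressed in~\cite{WH-JT-RS:13, OI-JT-IB:15, NS-JT-HM:15}.
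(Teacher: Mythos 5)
Your overall route is the same as the paper's: part (i) by induction with base case $k=0$, using the Markov property of the transition and the conditional independence of $\obsk$ given $(\pargenk,\sensingpolk)$, then part (ii) as an immediate application of the covariance inequality \eqref{eq:ineq-bmse-wwb-general} to the conditional data model identified in (i), with the $\traceW$ statement following from monotonicity of the weighted trace under the Loewner order. That much is correct and matches Lemma~\ref{lemma:updates-identities} and the subsequent argument in Appendix~\ref{sec:proof-all-previous-measurements}.

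There is, however, one step you glide over in the displayed motion-update computation. After substituting the inductive hypothesis you have the integrand $\pd[\pargenk]{\pargenkm,\obskmall,\sensingpolkall}\,\pd[\pargenkm]{\obskmall,\sensingpolkmall}$, and for this to collapse via Chapman--Kolmogorov to $\pd[\pargenk]{\obskmall,\sensingpolkall}$ the second factor must first be upgraded to $\pd[\pargenkm]{\obskmall,\sensingpolkall}$, i.e.\ conditioned on the \emph{current} sensing parameter $\sensingpolk$ as well. This is a third conditional-independence fact, distinct from the two ``structural modeling assumptions'' you list (Markovianity of the transition and conditional independence of the measurement), and it is precisely where the feedback coupling you worry about in your last paragraph enters. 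The paper isolates it as assumption (ii) of Lemma~\ref{lemma:updates-identities} and closes it in Remark~\ref{remark:policy-independent-theta}: because $\sensingpolk$ is a \emph{deterministic} function of $(\obskmall,\sensingpolkmall)$ through \eqref{eq:opt-sensing}, it is conditionally independent of $\pargenkm$ given that history, so the extra conditioning changes nothing. Your phrase ``condition throughout on the realized sensing parameters'' gestures at this but does not supply the determinism argument, so as written the induction step has a small hole; adding that one observation makes your proof complete and essentially identical to the paper's.
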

This result is proved in Appendix~\ref{sec:proof-all-previous-measurements}. Relations~\eqref{eq:updates-identities} are what we would expect without selection of sensing parameters using previous measurements. 
Next we derive the \gls{WWB} for a frequency estimation model based on array measurements.


\section{Statistical model for  multi-dimensional frequency estimation for random initial phase }\label{sec:wwb-and-models-random-phase-array-processing}

In this section we derive the statistical performance bound based on the \gls{WWB} for a family of array processing models under Gaussian and independent uniform priors. This metric can be used both for adaptation of transmission variables and for optimal design of constrained sparse arrays and sampling schemes (cf.~\cite{MGH-DMN-CG-RS:18}). 
%

\subsection{Observation model for spatio-temporal sampling}\label{sec:obs-model}

Here we present an observation model  for array processing tasks that include \glsreset{DoA}\gls{DoA} and Doppler estimation of a single source, and also \gls{MIMO} schemes such as \gls{TDM} \gls{MIMO}.

Consider the following data model for an observation $\obs$ according to a sensing scheme depicted in Fig.~\ref{fig:diagram-data-cube},
\begin{align}\label{eq:obs-model-array-processing}
\obs &= \sv(\pargen) \sqrt{\SNR} + \noise
\in \C^\numobs ,
\end{align}
%
%
where $\pargen \define (\freq_1, ..., \freq_{\numpar-1}, \phase)^T\in \R^{\numpar}$ is the vector of unknown parameters,  
$\noise\sim \Gaussian{\C}(\bm{0}, \eye{N})$ is standard complex Gaussian noise,
and $\SNR$ is the (single element) \gls{SNR}, which is assumed known or estimated beforehand. 
Furthermore $\sv(\pargen)$ denotes the \textit{spatio-temporal steering vector} for one source with frequencies $\freq_j$ and initial phase~$\phase$, defined as
\begin{align}\label{def:spatio-temporal-sv}
\sv(\pargen) &\define  e^{i \arrex \pargen} = e^{i\sum_j \arr{j} \freq_j}e^{i\phase}  \in \C^\numobs ,
\end{align}
which depends on the $\numpar-1$ \textit{sampling vectors} $\arr{j} \in \R^{N}$, which we combine, for convenience, 
to form the  \textit{sampling matrix},
\begin{align}\label{def:sampling-matrix}
\arrex &\define (\arr{1},..., \arr{\numpar-1}, \ones{\numobs})\in \R^{\numobs\times \numpar} .
\end{align}
 The sampling matrix $\arrex \equiv \arrex(\sensingpol)$ can be parametrized by a sensing parameter $\sensingpol$ that can be designed or adapted, and which we omit in this section. 
 We refer to the generic parameter $\Freq = (\freq_1, ..., \freq_{\numpar-1}$) as \textit{frequency} vector, whereas $\phase$ is called \textit{initial phase} or \textit{phase}.

This data model can be applied to the estimation of several quantities for one source, including \gls{DoA} or Doppler estimation, where $\arr{1}$ refers to antenna positions or pulse times; joint azimuth-elevation estimation with $3$-dimensional arrays~\cite[eq. (38)]{DTV-AR-RB-SM:14}, where $\arr{1}$, $\arr{2}$, $\arr{3}$, are the coordinates of the antenna locations in some basis and $\pargen$ are the electronic angles; and range-Doppler-azimuth estimation~\cite{FE-PH-AMZ-FKJ-MW:17} in automotive applications after a Fourier transform in the fast-time domain for each Tx and Rx pair and each pulse.

Next we show an example of application to the problem of \gls{TDM} \gls{MIMO} array processing for joint \gls{DoA}-Doppler estimation~\cite{KR-MV-BY:14}. This is a general template that we use in Section~\ref{sec:simulation-adaptive-channel-selection} for adaptive selection of antenna elements in the scenario of DoA estimation.

\subsection{\gls{TDM} \gls{MIMO} \gls{DoA}-Doppler estimation}

Adaptive sampling for \gls{DoA}-Doppler estimation can involve the activation of receivers and transmitter activation sequences.
Consider $\numRx$ receivers in a linear array at positions $\dRx\in \R^{\numRx}$ that collect the echoes from a total of $\numPulse$ pulses transmitted by a subset of a total of $\numTx$ transmitters available, allowing repetitions, located at positions $\dTx \in\R^{\numTx}$. The pulses are sent one after the other at time instances $\tPulse \in \R^{\numPulse}$. 
The transmission variables to be optimized are (i) the specific \textit{subset} and \textit{order} \textit{of transmitter activations}, codified by the matrix $\tsm\in \{0, 1\}^{\numPulse\times \numTx}$ (where $\ones{\numPulse} = \tsm\ones{\numTx}$);
(ii) the \textit{subset of} $\numRxsel$  \textit{receivers} whose signals are processed, codified by,
$\rsm\in \{0, 1\}^{\numRxsel\times \numRx}$; and (iii) possibly the carrier frequency. 
A model for the observation of a single target with \gls{DoA} $\doa$, Doppler frequency $\dop=4\pi\velr/\lambda$, 
and complex amplitude $\amplitude = \magnitude e^{i\phase}$, is given by \cite[eq. (4),(5)]{KR-MV-BY:14}
 \begin{align}\label{def:TDMmodel}
 \obs &= \svtdm(\doa, \velr,\phase) 
\magnitude  + \noise, 
 \end{align}
 where the spatio-temporal steering vector for \gls{TDM} \gls{MIMO},  
 \begin{align*}
 \svtdm(\doa, \velr,\phase) \define e^{i \tfrac{1}{\lambda}(\dVirt \doa + \tVirt \velr)}e^{i\phase} ,
 \end{align*}
can be written as in~\eqref{def:spatio-temporal-sv} 
in terms of the positions of the \textit{active} virtual elements, i.e., pairs Tx, Rx, and the pulse times,
\begin{align*}
\dVirt
%
&\define \tsm\dTx \otimes \ones{\numRxsel} + \ones{\numPulse}\otimes \rsm\dRx 
\\
\tVirt &\define  \tPulse \otimes \ones{\numRxsel},
\end{align*}
(units of $\frac{1}{2\pi}$ for $\dVirt$ and $\frac{1}{4\pi}$ for $\tVirt$)
yielding the sampling matrix
\begin{align*}
\arrex(\sensingpol)=\tfrac{1}{\lambda}(\dVirt, \tVirt, \lambda\ones{\numPulse\numRxsel} )\in\R^{\numPulse\numRxsel \times 3} .
\end{align*}
This fits into our general model \eqref{eq:obs-model-array-processing} by identifying  $\pargen = (\doa, \velr, \phase)$ as parameters to be estimated, $\SNR = \frac{\magnitude^2}{\noisevar}$ as \gls{SNR},  $\arr{1} = \frac{\dVirt}{\lambda}$ as virtual array positions, and $\arr{2} = \frac{\tVirt}{\lambda}$ as virtual pulse times. 

In the next section we describe the construction of the WWB for the model~\eqref{def:spatio-temporal-sv}, \eqref{def:sampling-matrix} that includes the above scenarios.

\begin{figure}[]
		{\includegraphics[width=1\linewidth]{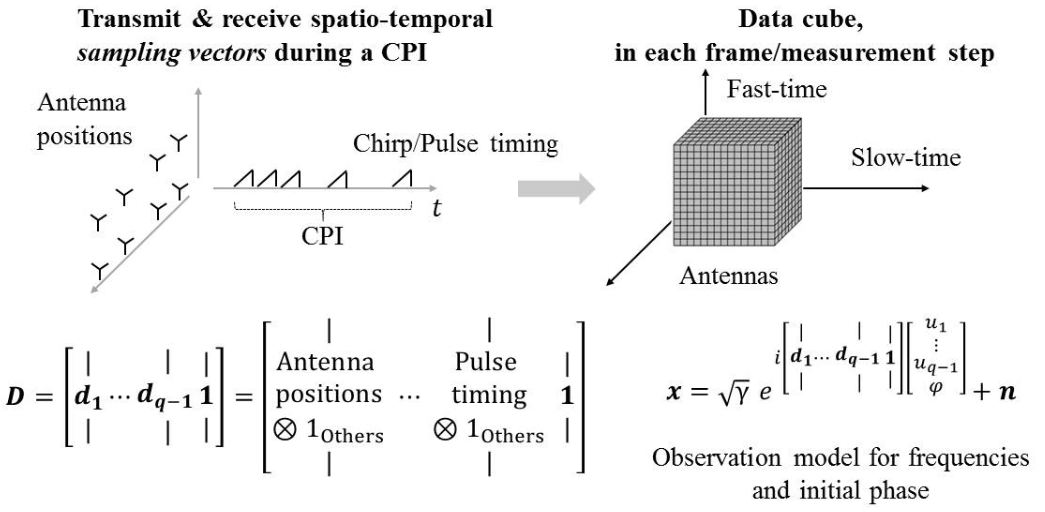}}
	\caption{Design of sensing parameters in spatial and temporal domains during a coherent processing interval (CPI), or frame, with random initial phase. The antenna positions and pulse timing can be nonuniform and sparse and can be selected or scaled between frames using the proposed Bayesian adaptive framework.}\label{fig:diagram-data-cube}
\end{figure}

\subsection{Random-phase WWB for array processing}\label{sec:wwb-random-phase}
Here we derive the \gls{WWB} 
 for the data model introduced in~\eqref{eq:obs-model-array-processing} both for Gaussian and independent uniform belief distributions on the frequency parameters. The calculation is similar to the general formulation in \cite{DTV-AR-RB-SM:14}, but employs a different class of test points.
\\
Performing the calculations outlined in the Appendix \ref{sec:WWB-theory} 
for the choice of test point $ \tphI = (\tphI[u_1], ..., \tphI[u_{q-1}], \tphI[\phase])^T\in \R^{\numpar\times 1},$ we obtain
\begin{subequations}\label{eq:randomphase-WWB-general}
\begin{align}
\wwb(\tphI) &= \frac{1}{\QI}\tphI \tphI^T \in \C^{\numpar\times \numpar}
\\
\QI &= 2 \frac{\eta(\tphI,\tphI)-\eta(\tphI,-\tphI)}{\eta(\tphI,\bm{0})^2}
\\
\eta(\tpgenI, \tpgenII) &= \acute{\eta}(\tpgenI, \tpgenII) \intprior(\tpgenI, \tpgenII) \label{eq:eta-factorized}
\end{align}
\end{subequations}
where the integral over observations $\acute{\eta}\equiv \acute{\eta}_{\pargen}$ according to~\eqref{eq:def-etaacute-gaussian}~is
\begin{align}\label{eq:rpwwb_etaacute}
\acute{\eta}(\tpgenI, \tpgenII) &=  \exp(-\frac{\SNR}{2}(\numobs-\repart{ \ones{\numobs}^T e^{i \arrex(\tpgenII-\tpgenI)}} )) ,
\end{align}
and the following integral over parameters remains to be determined after the choice of prior distribution,
\begin{align}\label{eq:intpriorsGen}
\intprior(\tpgenI, \tpgenII) &\define \int_{\Pargen}\pd{\pargen} \sqrt{\frac{\pd{\pargen+\tpgenI}\pd{\pargen+\tpgenII}}{\pd{\pargen}^2}} d\pargen .
\end{align}
Analytic expressions for this integral are given in the Appendix \ref{sec:IntegralOverPriors} 
  in the cases of Gaussian \eqref{intpriorsGaussian} and independent uniform \eqref{intpriorsUniform} priors.

\begin{remark}[Choice of test point matrix]\label{rem:WWBtpcolumn}
The computation of the \gls{WWB} \eqref{eq:def-wwb-parametric} requires to select a test point matrix $\tpH$. 
In \cite[sec. 4.4.1.4]{HLVT-KLB:13}, it is suggested to use test point matrices with at least as many columns $M$ as rows $\numpar$ (i.e. number of random parameters in the model). While still being valid lower bounds, the WWB matrices that arise from test points with $M<q$ are by construction rank-deficient and therefore suboptimally suited to produce tight bounds to a presumably full-rank \gls{BMSE} matrix.
For the sake of simplicity in terms of derivation and computational time, we nonetheless select the test point matrix 
 to comprise only one column and perform global optimization 
to find the tightest bound in this class.
\end{remark}

\begin{remark}[Factorization of integrals]
Note that the factorization of $\eta = \acute{\eta} \cdot \intprior$ in~\eqref{eq:eta-factorized} (cf. equation~\eqref{eq:eta-general} in the Appendix) into the product of an integral over observations and an integral over parameters is due to the fact that, for this choice of model and parameters, 
the function $\acute{\eta}$ is independent of $\pargen$.
 Unfortunately, this is not true if e.g. the \gls{SNR} $\SNR$ is included into the set of random parameters $\pargen$ or in the case of several targets. The consequence of the latter is an increase in computation time.
\end{remark}

Combining the expressions in~\eqref{eq:randomphase-WWB-general},
the optimization problem~\eqref{def:costwwb} can be written as
\begin{align}\label{eq:rpwwbopt}
\cost &= \sup_{\tphI\in \tpDom} \traceW\wwb(\tphI)   
\end{align}
for $\tphI$ in the domain (cf.  \eqref{eq:domain-condition} in the Appendix)
	\begin{align}\label{eq:domain-conditionI}
	\tpDom\define \setdef{ \tphI\in\R^{\numpar\times 1} }
	 {\Pargen\cap( \Pargen + \tphI) \neq \emptyset }.
	\end{align}
	Note that this set depends on the prior belief through its support, $\Pargen \define \operatorname{supp}(\pd{\pargen})=\setdef{\pargen\in\R}{\pd{\pargen}>0}$.

\begin{remark}[Symmetry of \gls{WWB} and $\tpDom$]\label{rem:symmetryWWB}
A sufficient condition for the symmetry $\wwb(\tphI) = \wwb(-\tphI)$
results from the corresponding symmetries $(\tpgenI, \tpgenII) \to (-\tpgenI, -\tpgenII)$ 
for $\acute{\eta}$, which holds in general in view of~\eqref{eq:rpwwb_etaacute}, and also
for $\intprior$, which depends on the specific prior distribution.
Since $\tphI\in\tpDom \iff -\tphI\in\tpDom$
 by definition of $\tpDom$ in~\eqref{eq:domain-conditionI},
  this symmetry implies that we can neglect those test points of $\tpDom$ which are on one side of an arbitrarily chosen hyperplane through $0\in \tpDom$ when solving the optimization problem~\eqref{eq:rpwwbopt}. We thus choose to restrict the optimization to test points with positive phase component $\tphI_{\phase}\geq 0$.
\end{remark}

\cg{
The formulas 
\eqref{eq:rpwwbUniform}, \eqref{eq:rpwwbGaussian} for the \gls{WWB} with uniform and Gaussian priors have been implemented in Matlab as vectorized functions, which makes the optimization over test points $\tphI$ \eqref{eq:rpwwbopt} more endurable. (could share code)\\
As optimization method we extended the simulated annealing algorithm offered by \ref{} to handle more than one 'particle' in parallel and added a stopping criterion based on average relative improvement compared to function values in previous temperatures.
(one optimization with stopping criterion takes less than a second for the array scaling example below).\\
}

For both uniform and Gaussian priors, the integral over priors $\intprior$ and thus the corresponding \gls{WWB}~\eqref{eq:rpwwbUniform}, \eqref{eq:rpwwbGaussian} depend only on the variance
of the prior, and is independent of its mean.\\
The dependence on the prior only through the variance makes it convenient to numerically characterize the controller based on the corresponding \gls{WWB} cost function for observation models in low-dimensional estimation and for low-dimensional sensing parameters. We study the controller output based on inputs given by distributions with this property for an array scaling task in the following section.

\section{Analysis of random-phase \gls{WWB} for scaling of sampling period}\label{sec:controller-wwb}

Here we analyze the \glslink{WWB}{WWB} constructed in Section~\ref{sec:wwb-and-models-random-phase-array-processing} for a model of frequency estimation in one dimension plus an extra dimension for the initial phase. We compare them with the corresponding bounds assuming a known initial phase in the decision problem of selecting the optimal scaling of the array. In particular, we characterize the controller choices numerically using a look-up table that can be used for real-time computation and also interpret the choices, specially the added robustness of considering the initial phase unknown.

\subsection{Observation model for array scaling}
We investigate the following estimation problem as a special case of model~\eqref{eq:obs-model-array-processing}: Consider a sampling vector $\arrI\in \R^{\numobs}$ that can be scaled by a factor $\scaling_k\in\R_{+}$, that we wish to adapt at each measurement step  $k = 1, 2, \dots$, according to the data model 
\begin{align}\label{eq:ArrayScalingMeasurement}
\obs_k &= e^{i\scaling_k \arrI \freq}e^{i\phase_k}\sqrt{\SNR} + \noise_{k}.
\end{align}
We recall that the Gaussian noise realizations $\noise_{k}\sim \Gaussian{\C}(0,\eye{\numobs})$ and the uniform random initial phases $\phase_k\sim \Uniform{[-\pi, \pi]}$ are assumed independent from each other and between time steps.

Regarding applications, the sampling vector and the frequency parameter in model~\eqref{eq:ArrayScalingMeasurement} can have various interpretations. For instance, the scaling parameter can be the (inverse) wavelength $\lambda$ of the carrier frequency in narrow-band \gls{DoA} estimation, or the \gls{PRI} for a train of pulses in Doppler estimation (after the range bin is determined). For adaptation steps on a time scale exceeding the coherency interval of the radar, the prescribed randomization of the initial phase is necessary. 
It is worth noting that even though these applications can be cast into the mathematical shape of model \eqref{eq:ArrayScalingMeasurement}, it is necessary to reflect on how changing the scaling affects other radar properties in the context of the broader estimation task (e.g. changing the \gls{PRF} affects range ambiguities). On the other hand, the more general array processing model~\eqref{eq:obs-model-array-processing} allows to consider these broader scenarios.\\
	

\begin{figure}[]
	\begin{minipage}{1.00\columnwidth} 
		
				\includegraphics[width = 1.00\columnwidth]{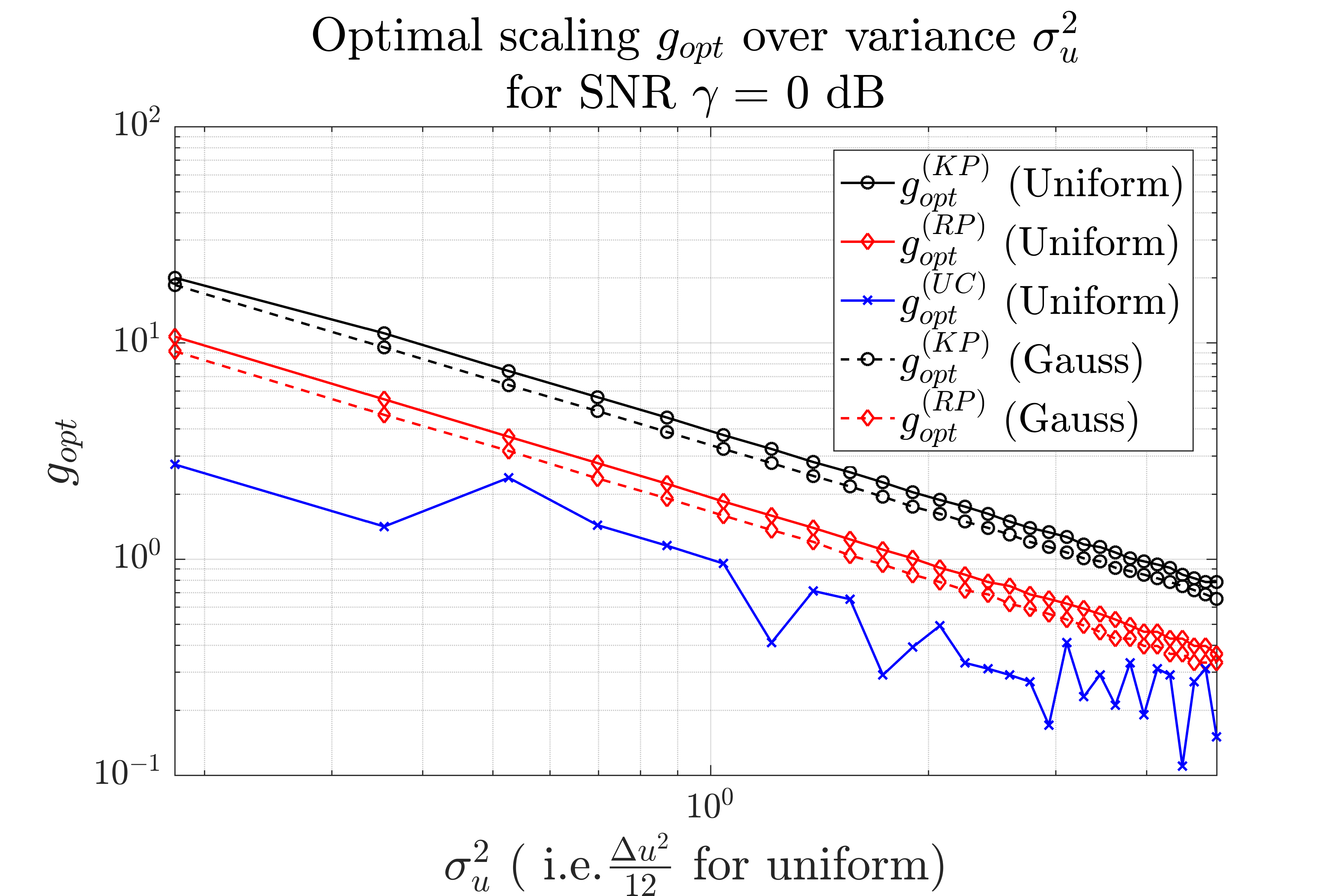}
		\caption{Optimal scaling versus variance $\sigma^2_\freq$ of Gaussian and uniform priors. The optimal choices according to the Random-Phase \gls{WWB} are depicted in red, the choices for the Known-Phase \gls{WWB}~\cite[Eq. (57)]{DTV-AR-RB-SM:14} in black, and for the unconditional \gls{WWB}~\cite[Eq. (56)]{DTV-AR-RB-SM:14} in blue.}
		\label{figure:GAllControllerSNRhigh}
		
	\end{minipage}
\end{figure}

\begin{figure}[]
	\begin{minipage}{1.00\columnwidth} 
		
		
		\includegraphics[width = 1.00\columnwidth]{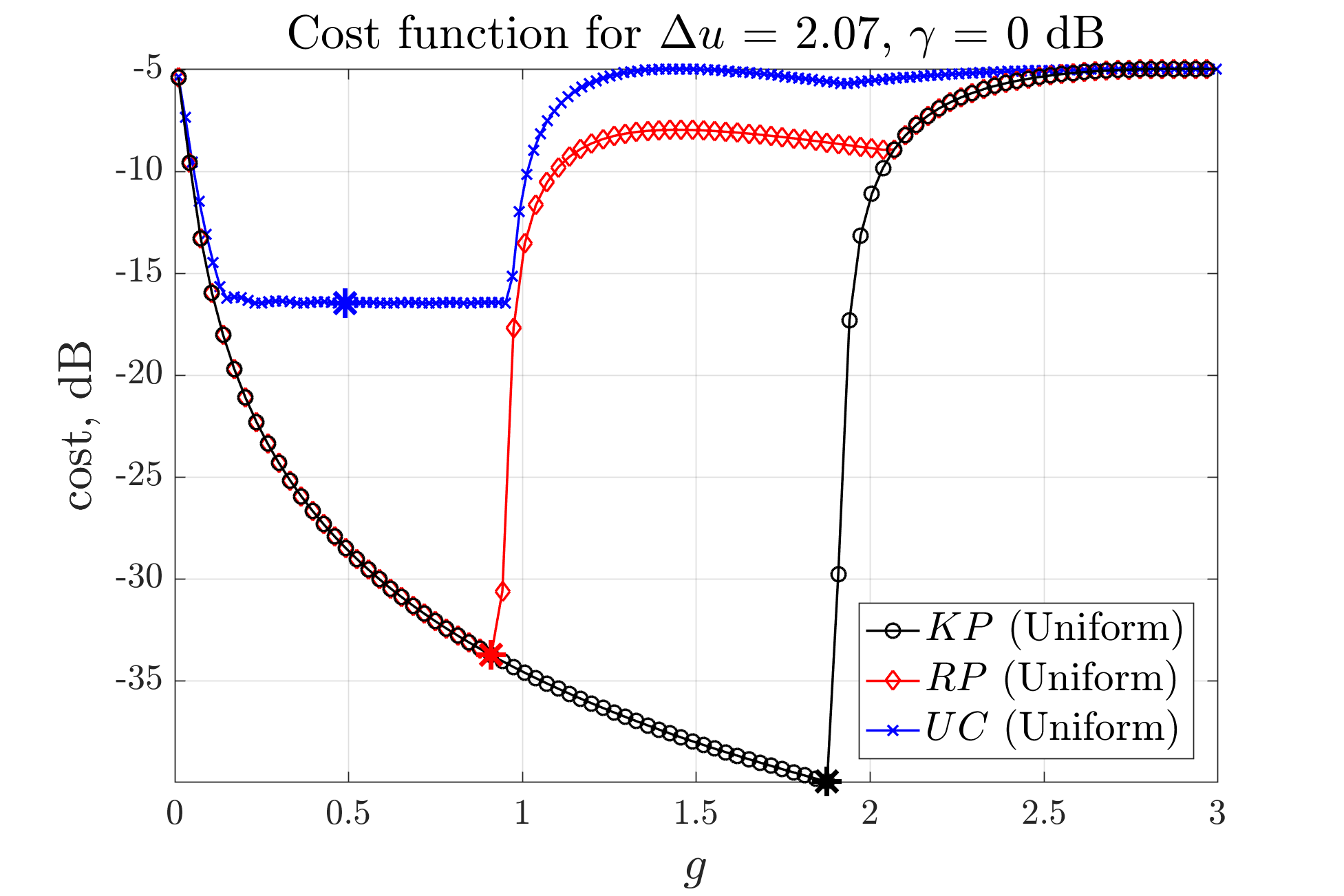}
		\caption{Cost function $\cost(\scaling)$ for all three models. The unconditional model
			optimal scaling is harder to determine precisely for the unconditional model due to the slight variations in the cost function `basin' and the optimal value is smaller and thus more conservative.}
		\label{figure:BasinCostUniform}

	\end{minipage}
	
\end{figure}

\subsection{Characterization of controller for uniform array scaling}\label{sec:characterization-controller}

Here we analyze the controller choices based on optimization of the \gls{WWB} cost function \eqref{eq:rpwwbopt} for uniform and Gaussian priors, respectively given by expressions \eqref{eq:rpwwbGaussian} and \eqref{eq:rpwwbUniform} in Section \ref{sec:wwb-random-phase}. To this end, we visualize the dependence of the optimal scaling on prior variance and \gls{SNR} value and compare with the results obtained from related statistical performance bounds.

The scaling choices are computed for a uniform array of $\numobs = 12$ elements with positions $\arrI =  \pi (1,\dots,\numobs)-\tfrac{\numobs+1}{2}$
(naturally, other arrays are possible). The array's center of mass is at the origin, i.e. $\arrICM = \frac{1}{\numobs}\sum_n \arrI_n = 0$. 
We denote the optimal scaling choice according to the presented \gls{RP} \gls{WWB} in \eqref{eq:rpwwbUniform} and \eqref{eq:rpwwbGaussian},  as $\scaling^{(RP)}_{opt} \equiv \scaling^{(RP)}_{opt}(\sigma^2_{\freq})$. Those depend only on the variance $\sigma^2_{\freq}$ of the Gaussian or uniform belief distribution, which indicates the certainty we have on the parameter $\freq$. 
%
For comparison, we discuss also 
the optimal scaling choices according to the \glspl{WWB} for two slightly different models: the scalings referred to as $\scaling^{(KP)}_{opt}$ are found from a \textit{\gls{KP}} model which assumes the phase $\phase$ as known~(based on \cite[Eq. (57)]{DTV-AR-RB-SM:14}), while the scalings referred to as $\scaling^{(UC)}_{opt}$ correspond to an \textit{\gls{UC}} model with Gaussian amplitude based on \cite[eq. (56)]{DTV-AR-RB-SM:14}. Details on the latter are provided in the Appendix \ref{sec:UnconditionalWWB}.\\

We focus on two aspects of the scaling selection, (i) the dependence on variance and \gls{SNR}, and (ii) the sensitivity of the \gls{WWB} with respect to variations of the optimal scaling.

\paragraph{Dependencies of \gls{RP} and \gls{KP} models}
Regarding the dependence on variance (or \glslink{FoV}{Field-of-View} length), consider
Fig.~\ref{figure:GAllControllerSNRhigh} that shows the scaling selections for the \gls{RP} model with uniform and Gaussian priors (red). As we expect, the more uncertainty we have about $\freq$ (i.e. $\sigma^2_\freq$ large), the smaller the optimal  array scaling $\scaling$ to avoid aliasing, and vice-versa, the more certainty about $\freq$ (i.e. $\sigma^2_\freq$ small), the larger the scaling, to trade off ambiguity suppression with accuracy. 
The difference between Gaussian and uniform priors of same variance is comparatively insignificant. 
We observe that in this \textit{high \gls{SNR} scenario} with $\SNR = 0$ dB, the choices based on the \gls{KP} \gls{WWB} (black) are roughly twice the value as compared to the random-phase model, $\scaling^{(KP)}_{opt} \approx 2\scaling^{(RP)}_{opt}$. Further inspection (not shown here), reveals that this relationship holds for \gls{SNR} values approximately above $\SNR = -1.5$~dB.
For lower \gls{SNR} values, we find that the optimal scalings according to the random and known-phase models coincide $\scaling^{(KP)}_{opt} \approx \scaling^{(RP)}_{opt}$.

\begin{remark}[Dependence of \gls{KP} \gls{WWB} on coordinate origin]\label{rem:dependence-coordinate-origin}
The known-phase \gls{WWB} depends on the coordinate origin chosen to define the sampling vector $\arrI$, which in these visualizations is taken as the array center of mass. This makes it harder to analyze this WWB model and also argues against using it.
This dependence also occurs for the \gls{CRB} where the Fisher information matrix satisfies $\FIM =2 \SNR \scaling^2 \Vert \arrI\Vert^2$ for the known-phase model, while for the random-phase model  $\FIM = 2 \SNR \scaling^2 \Vert \arrI -\arrICM\ones{\numobs}\Vert^2$.
\end{remark}
 
 \paragraph{Dependencies of \gls{UC} model}
The unconditional model considers a signal with random amplitude $\amplitude\sim \Gaussian{\C}(0, \SNRUC)$ (cf. Appendix~\ref{sec:UnconditionalWWB}) yielding a notion of SNR $\SNRUC$ with a different interpretation. The deterministic SNR notion $\SNR$ of the other models obey an exponential distribution $\SNR\sim  \mathrm{Exp}(\SNRUC)$ under this model,
such that, even though $\expec{}[\SNR] = \SNRUC$, a given value of $\SNRUC$ emphasizes low SNR values according to the previous notion.
The optimal scaling $\scaling^{(UC)}_{opt}$ (blue) in Fig. ~\ref{figure:GAllControllerSNRhigh} (for $\SNRUC = 0$ dB) is thus more conservative compared to the other models for high SNR. 
The unsteady behavior of the curve for the unconditional model can be understood in view of Fig. \ref{figure:BasinCostUniform}, which shows the cost function plotted over scalings. We observe that the unconditional model exhibits an almost flat 'basin' of low values with only slight oscillation in which the optimal scaling lies for this model. It is thus likely that the scaling in the optimization grid with the lowest cost function value is found in a neighboring peak of the analytical optimum. 
 At low SNR, the respective scalings of the other two models (not shown here) are smaller than $\scaling^{(UC)}_{opt}$. In summary, the optimal scaling does not depend very strongly on the SNR $\SNRUC$ for the unconditional model and is generally more conservative.

\begin{figure}[]
	\begin{minipage}{1.00\columnwidth} 
		
		\centering
		
		\includegraphics[width = 1.00\columnwidth]{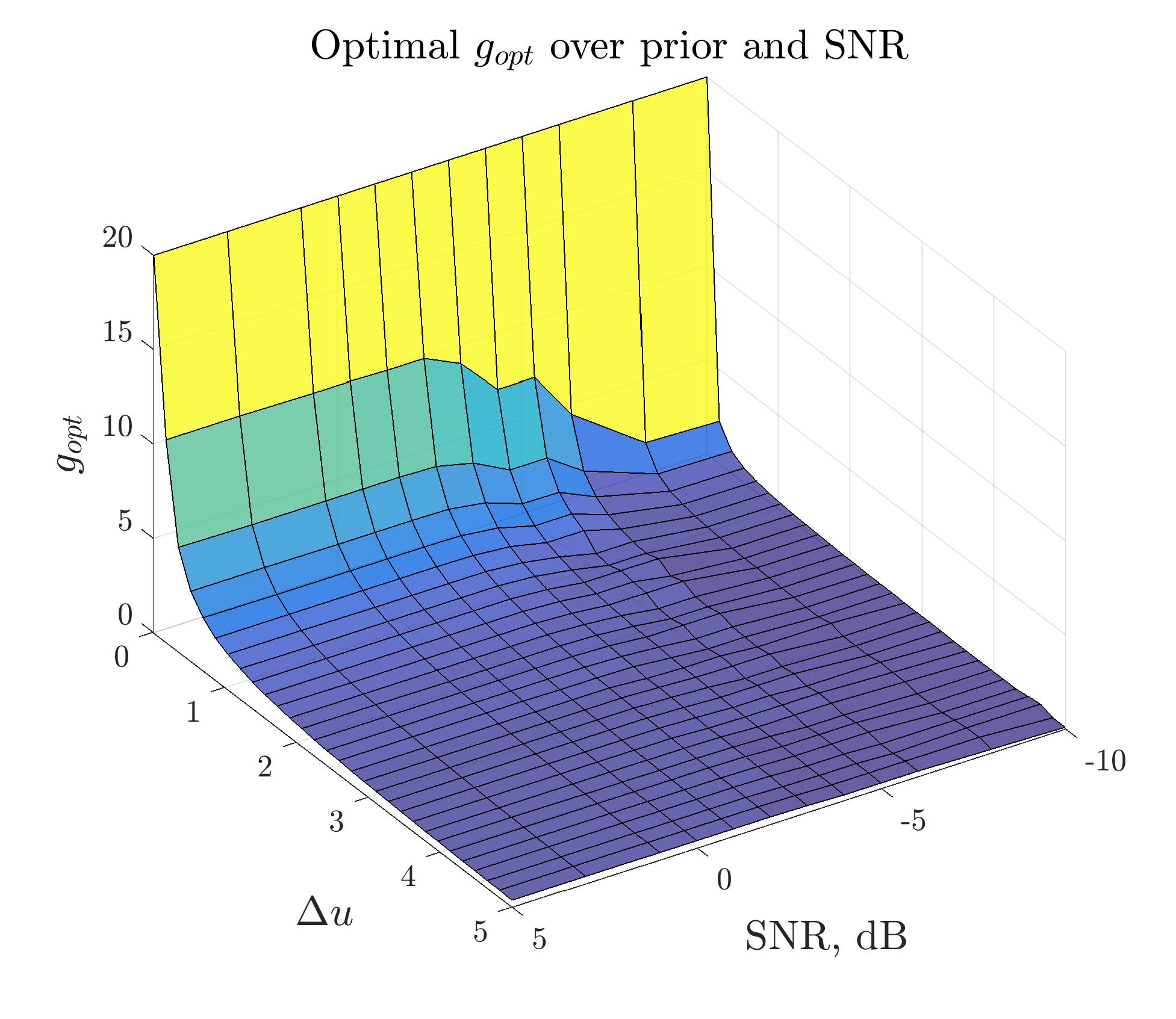}
		\caption{Optimal scaling according to \glslink{RP}{random-phase} \gls{WWB} versus \gls{SNR} $\SNR$ and support length $\Delta \freq$ of uniform prior.
		}
		\label{figure:GsurfaceSNRpriorRPWWB}
		
	\end{minipage}
\end{figure}

\paragraph{Sensitivity of \gls{RP}  model}
 With regards to the sensitivity of the \gls{RP} \gls{WWB} with respect to scaling, consider
Figure~\ref{figure:costsurfaceMerge} depicting the optimal choices for the case of uniform priors at both high and low \gls{SNR}, including the value of the \gls{RP} \gls{WWB} cost (in color code) for each pair $(\Delta\freq,\scaling)$.
It is noteworthy, from a sensitivity perspective, the change of the \gls{RP} \gls{WWB} with respect to scaling choice.
We observe that for high \gls{SNR} (cf. left plot in Fig.~\ref{figure:costsurfaceMerge}), the optimal scaling according to the \gls{RP} \gls{WWB}  $\scaling^{(RP)}_{opt}$ (red), is only slightly smaller than scalings which abruptly exhibit significantly higher cost values, whereas for low \gls{SNR} (cf. right plot of Fig.~\ref{figure:costsurfaceMerge}), the optimal scalings are not so close to such a \textit{threshold}. 
This is relevant because for high \gls{SNR} the optimal choices for the alternative metric of \glslink{KP}{known-phase} \gls{WWB} $\scaling^{(KP)}_{opt}$ (black) are slightly bigger and thus in the region of higher cost from the perspective of the \gls{RP} \gls{WWB}. This phenomenon is studied in the next section, where we show using the array factor that the \gls{RP} \gls{WWB} captures the notion of aliasing differently than the \gls{KP} \gls{WWB}.

\begin{figure}[]  %
	\begin{minipage}{1.00\columnwidth} 
		%
		
			\includegraphics[width = 1.00\columnwidth]{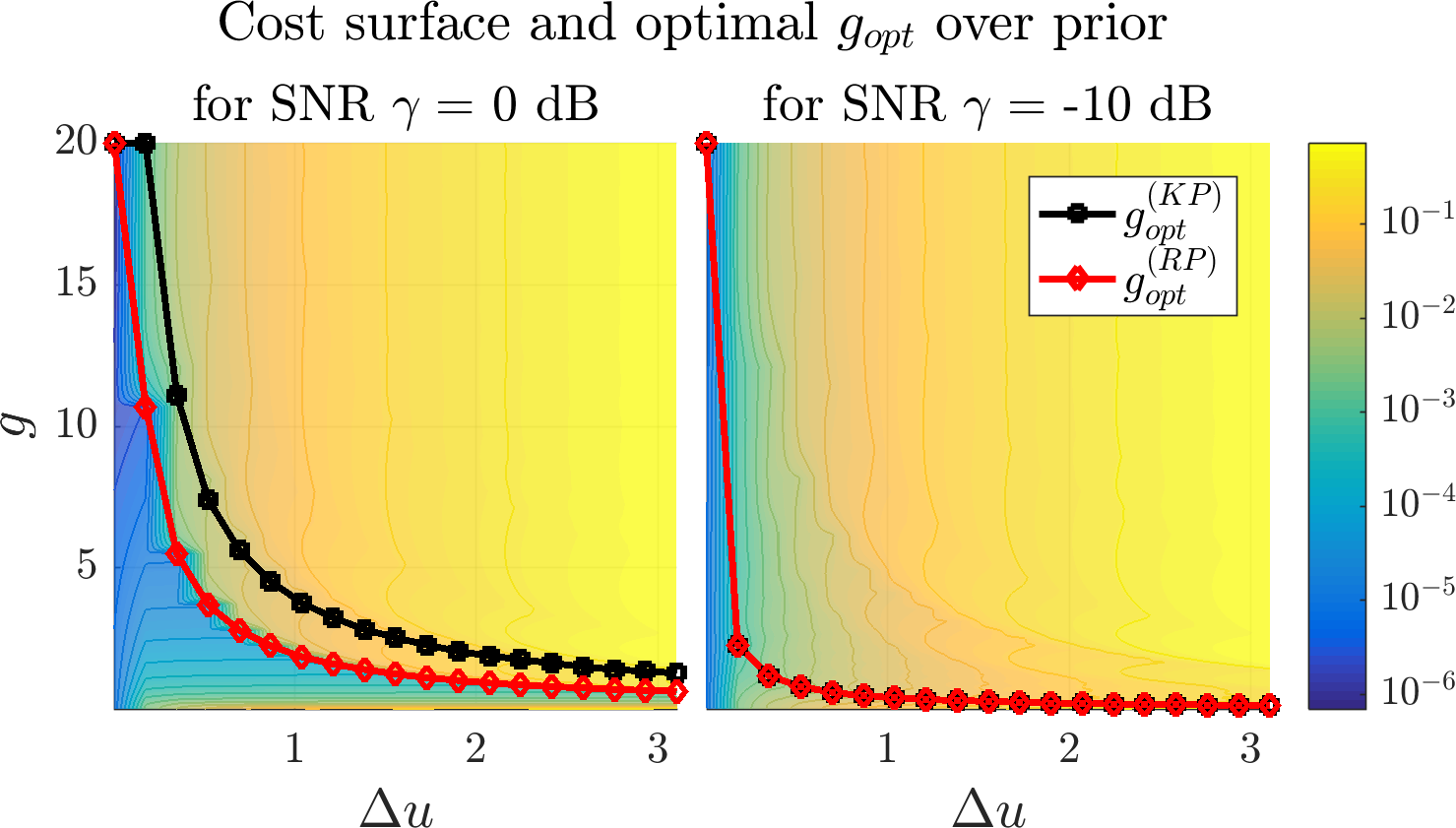}
		\caption{
\glslink{RP}{Color code: random-phase (RP)} \gls{WWB} with uniform prior for each pair $(\Delta\freq,\scaling)$.
			The red curve shows the optimal scaling for each $\Delta \freq$ (i.e. minimum of \gls{RP} \gls{WWB} along each vertical line) at the given \gls{SNR}. The black curve shows the corresponding scaling choice according to the \glslink{KP}{known-phase} \gls{WWB}. For high SNR (left), we observe the latter yields consistently larger scalings.
			 For low SNR (right), the choices of the \glslink{KP}{known-phase} \gls{WWB}, which depend on the convention for the choice of coordinate origin, are the same as for the \glslink{RP}{random-phase} \gls{WWB}. 
		}
		\label{figure:costsurfaceMerge}
		
	\end{minipage}
	
\end{figure}

\subsection{Ambiguity function for optimal choices of known-phase and random-phase WWB for array scaling}

Here we interpret the behavior of the \gls{RP} \gls{WWB} and the \gls{KP} \gls{WWB} calculated in the previous section in terms of the array factor for the same model of frequency estimation~\eqref{eq:ArrayScalingMeasurement} as a function of array scaling.
The array factor or ambiguity function with respect to the sampling vector (or array) $\arrI$ is given by
\begin{align}\label{def:Beampattern}
B(\freq,\freq+\BPh) &\define 
\frac{\langle e^{i \scaling \arrI \freq}, e^{i \scaling \arrI (\freq+\BPh)}\rangle}
{\Vert e^{i \scaling \arrI \freq}\Vert\Vert e^{i \scaling \arrI (\freq+\BPh)}\Vert } \notag
 \\
& =
 \frac{\langle \ones{\numobs}, e^{i \scaling \arrI \BPh}\rangle}
 {\numobs} 
 = \frac{1}{\numobs} \sum_{n = 1}^\numobs e^{i\scaling \arrI[n] \BPh}\equiv B(\BPh) .
%
\end{align}
This quantity appears in the \gls{WWB} through the function $\acute{\eta}$ in~\eqref{eq:rpwwb_etaacute}. 
It can be interpreted in several ways: (i) Cosine distance or ambiguity function between signals 
$e^{i \scaling \arrI \freq}$; (ii) The \gls{DTFT}, or projection, of a signal $e^{i \scaling \arrI \freq}$ into the frequency-shifted signal $e^{i \scaling \arrI (\freq+\BPh)}$.
Figs.~\ref{figure:optimalBeampatternSNRhigh} and~\ref{figure:optimalBeampatternSNRlow} depict the array factor of the optimal arrays for the RP and KP models, respectively, for a specific \glslink{FoV}{Field-of-View} length~$\Delta u$ for two \gls{SNR} values, $\SNR=0$, regarded here as \textit{high}, and $\SNR=-10$, regarded as \textit{low}. \\
 We make the following observations:\\
 
(i) For \textit{high} \gls{SNR}, the \glslink{RP}{random-phase} \gls{WWB} favors the largest scaling which places the first grating lobe (i.e. smallest $\BPh_1>0$ with $\vert B(\BPh)\vert = 1$) right outside of $\Delta \freq$ as measured from the main lobe (c.f. Fig.~\ref{figure:optimalBeampatternSNRhigh}). This choice maximizes the accuracy (since larger apertures correspond to thinner mainlobes) \cg{thinnest main lobe to the bigger aperture?,} while avoiding aliasing even for the extreme case of the parameter value being in the extremes of the prior belief distribution.
This choice explains the sensitivity phenomenon displayed in Figs.~\ref{figure:BasinCostUniform} and \ref{figure:costsurfaceMerge} (left) where the cost function shows a dramatic increase for \textit{bigger} scalings. A lesson from this regarding the application of the \gls{RP} \gls{WWB} for adaptive scaling is the following: if the prior is given by a uniform approximation of the filter's empirical density output, then the support length $\Delta \freq $ needs to be chosen \textit{conservatively}, at least in the case of \textit{high} \gls{SNR}.\\

(ii) For \textit{low} \gls{SNR}, the behavior is governed by the sidelobes (c.f. Fig.~\ref{figure:optimalBeampatternSNRlow}).
In the context of Fig.~\ref{figure:costsurfaceMerge} (right), we note that the optimal scaling is in a region of relatively small slope or change of the \gls{RP} \gls{WWB}. It is noteworthy that for \textit{low} \gls{SNR} below approximately $ -1.5$ dB the \gls{KP} \gls{WWB} and \gls{RP} \gls{WWB} yield the same choice.\\

(iii) The \gls{KP} \gls{WWB} can identify only aliasing problems for test points that satisfy $\repart{B(\BPh)} = 1$.
Their location depends on the convention for the array's center of mass $\arrICM$. For a uniform array, 
the array factor is $B(\BPh) =\tfrac{1}{\numobs} e^{i\scaling \BPh \arrICM}\sin(\numobs\frac{\pi \scaling \BPh}{2})/\sin(\frac{\pi \scaling \BPh}{2}) .$
Choosing the array's center of mass as coordinate origin (i.e. $\arrICM = 0$), we find that every other of the grating lobes at $\BPh_k = \frac{2k}{\scaling}$ is a minimum for $\repart{B(\BPh)}$ if the number of observations $\numobs$ is even, and not a maximum as for $\vert B(\BPh)\vert$. 
This is especially the case for the first grating lobe $\BPh_1$ and thus the controller based on the \glslink{KP}{known-phase} \gls{WWB} chooses, for fixed $\Delta \freq$, a scaling \textit{twice as large than it should.}
\cg{(c.f. Fig.~\ref{figure:optimalBeampatternWWBold} and~\ref{figure:GsurfaceSNRpriorDiff}.)}
(In the case of the uniform array, we could choose an offset $\arrICM = \pm\frac{\pi}{2}$ to detect the grating lobe of $\repart{B(\BPh)}$, but in general such an offset depends on the array.)

\begin{figure}[ht]
	\begin{minipage}{1.00\columnwidth} 
		
		
		\includegraphics[width = 1.00\columnwidth]{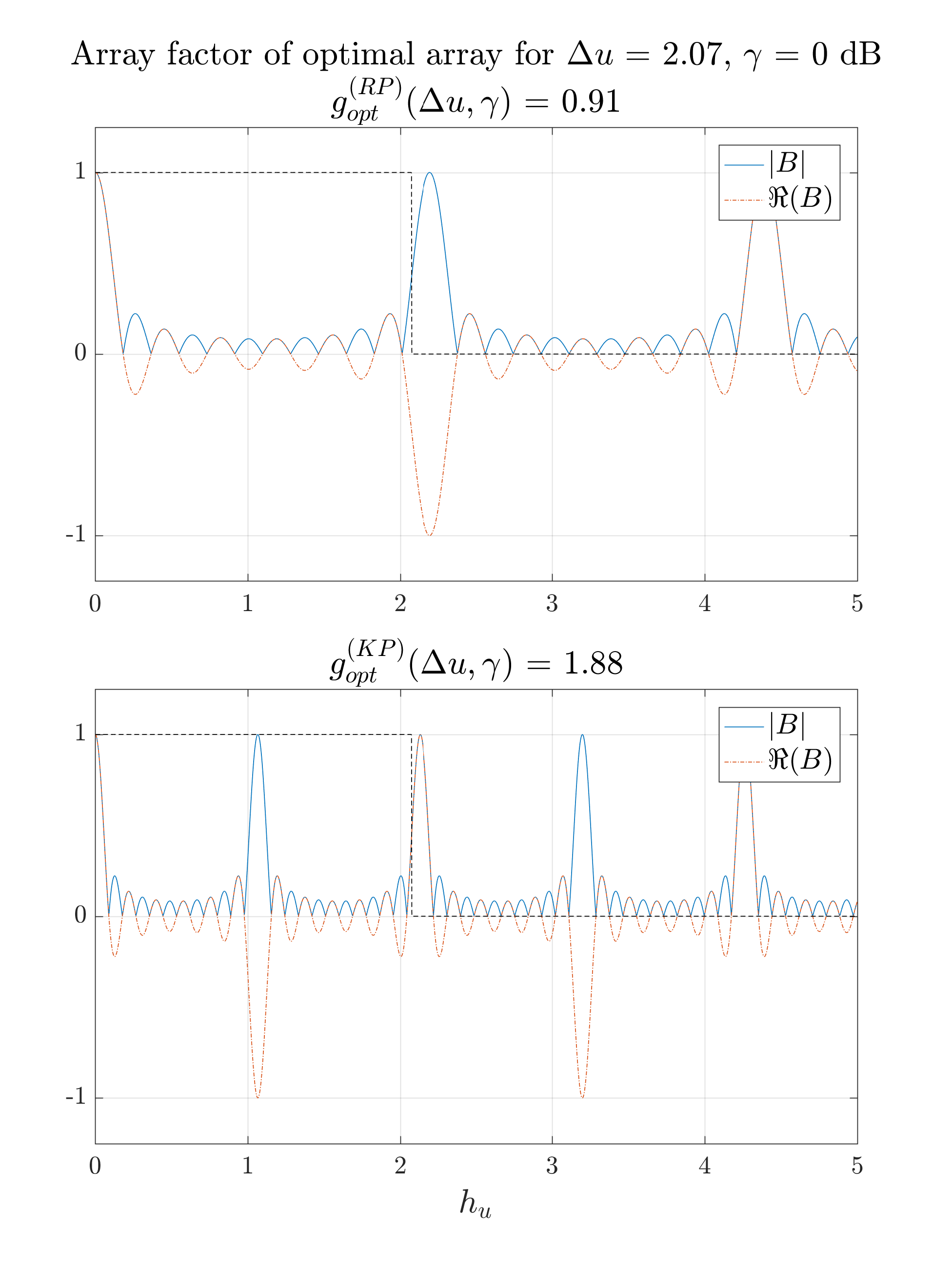}
		\caption{Array factor \eqref{def:Beampattern} of optimal array according to the \gls{WWB} for random-phase and known-phase models for \textit{high} \gls{SNR}. The \textit{alternating symmetry} between the absolute value and the real part only appears when the coordinate origin is the center of mass. The optimal scaling according to \gls{RP} \gls{WWB} seems to depend on $B(\BPh)$ through the absolute value (which is coordinate origin invariant), in contrast with the optimal scaling for the \gls{KP} \gls{WWB} that depends on the real part and thus depends on the coordinate origin.}
		\label{figure:optimalBeampatternSNRhigh}
	\end{minipage}
	\hfill

	\vspace*{0.5cm}
	\begin{minipage}{1.00\columnwidth} 
		
		
		\includegraphics[width = 1.00\columnwidth]{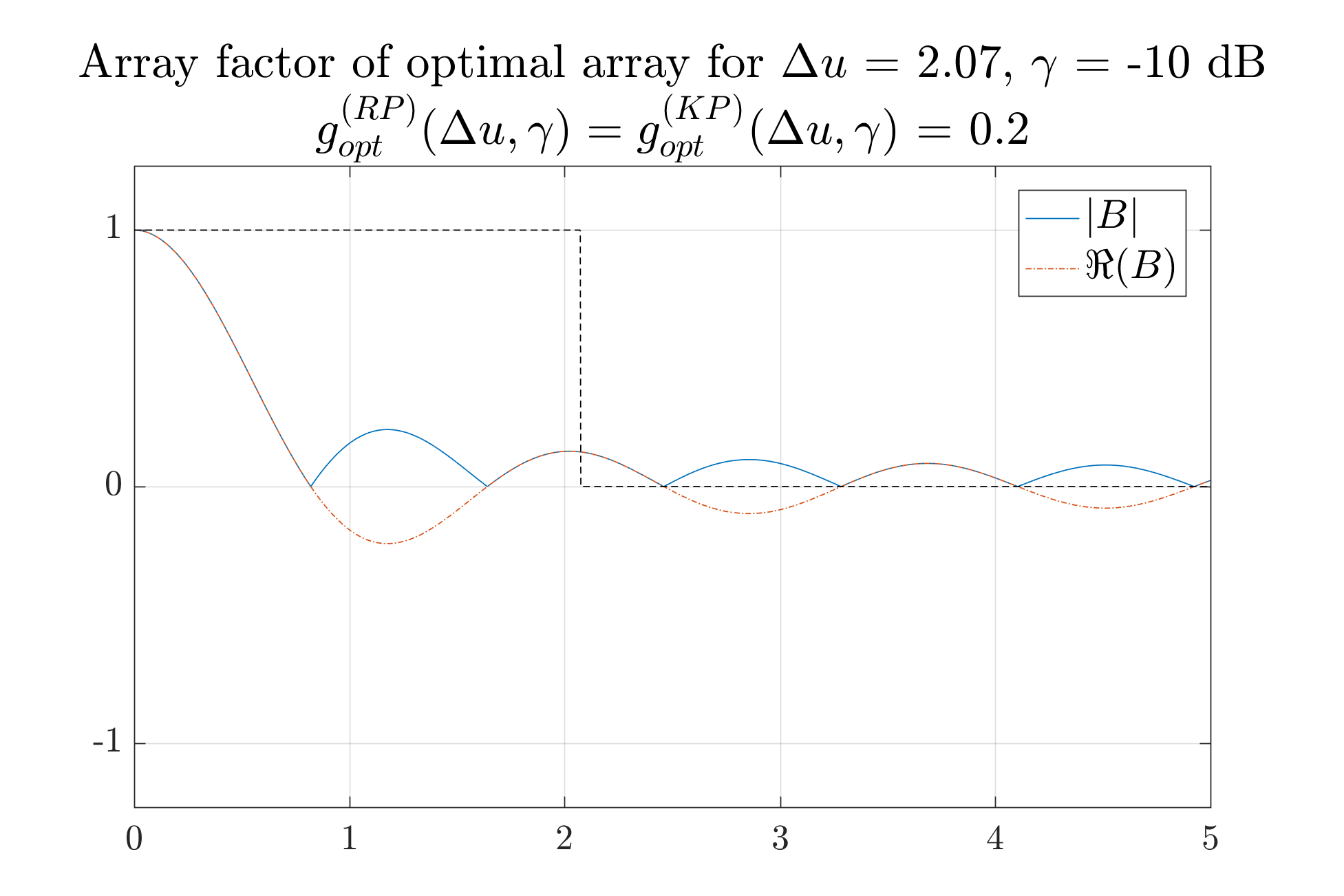}
		\caption{Array factor \eqref{def:Beampattern} of optimal array according to the \gls{WWB} for random-phase and known-phase models. At \textit{low} \gls{SNR}, the optimal scaling is the same for both models.}
		\label{figure:optimalBeampatternSNRlow}
	\end{minipage}
\end{figure}

\section{Adaptive array scaling and channel selection for frequency estimation}\label{sec:adaptive-array-scaling}

Here we apply the adaptive sensing framework of Section~\ref{sec:adaptive-sensing-framework} using the \gls{WWB} metric derived in Section~\ref{sec:wwb-and-models-random-phase-array-processing} to the problem of frequency estimation in two scenarios: (i) adaptation of array \textit{scaling}, and (ii) antenna selection.
In the first case, the parameter optimized is 1-dimensional and we can use the numerical characterization in Section~\ref{sec:controller-wwb}. In the second case, the parameter optimized is discrete, with as many elements as groups of antennas that can be active, and we use a neural network to fit the optimal test point evaluation of the \gls{WWB}.
First we define the Bayesian updates and their particle filter implementation, and then we simulate the closed-loop between the filter and the controller in both scenarios.

\clearpage

\subsection{Bayesian measurement and motion updates}

A characterization of the Bayesian filter requires to define the measurement and the motion updates. 
The likelihood function, $\pd[\obs]{\pargen, \scaling}$ for $\pargen = (\freq, \phase)^T\in\R^{2}$,  required in the measurement update for model~\eqref{eq:ArrayScalingMeasurement}, obeys the general Gaussian model in~\eqref{eq:obs-model-array-processing} 
and is straightforward. 
The transition model between measurement steps is as follows.
The frequency parameter of interest $\freq$ is assumed, in this example, constant ($\freq_k = \freq, \; \forall k$) along the execution of the algorithm $k = 1,2,\dots$, and the initial belief distribution is assumed uniform in the interval $[\freqL, \freqU]$. Naturally, other motion models can be implemented by the particle filter. The initial phase $\phase$ however, undergoes a transition that is crucial for our observation model: after each measurement step it is reinitialized at random, to capture the fact that no information is available due to incoherent measurements. Formally, the initial belief $\pplus_0(\pargen_0)$ is uniform on the Cartesian product $[\freqL, \freqU]\times [-\pi, \pi]$, and 
the state evolution is modeled by 
\begin{align}\label{eq:motion-dynamics-RP}
\pargen_k =
\begin{pmatrix}
\freq_{k-1} \\ 0
\end{pmatrix} + \begin{pmatrix}
0  \\ m_\phase
\end{pmatrix}
\text{ with } m_\phase \sim \Uniform{[-\pi, \pi]} .
\end{align}
This yields transition probabilities independent of the scaling $\scaling_k$, i.e.
\begin{align*}
\transitionden(\pargen_{k} \vert \pargen_{k-1}) = \delta_{\freq_{k-1}}(\freq_{k})\frac{1}{2\pi}\cf{[-\pi, \pi]}(\phase_k) .
\end{align*}

%
The measurement and motion updates are implemented using a particle filter, described next.

\subsection{Particle filter implementation}\label{sec:characterization-filter}

We employ a particle filter $\{\partPF_{\pargen}, \weightPF\}$ (see e.g. \cite{MSA-SM-NG-TC:02}), 
comprising $\NP$ particles $\partPF_{\pargen} = \{\pargen^i=(\freq^i,\varphi^i)^T \}_{i = 1}^{\NP}\in\R^{\numpar\times \NP}$ and weights $\weightPF\in \R^{\NP\times 1}$ to represent, at each step $k\ge 1$, the belief distribution of $\pargen_k = (\freq_k, \phase_k)^T\in \R^{2}$.
The particle filter is initialized with particles drawn from~$\pplus_0(\pargen_0)\define p_0(\pargen_0)$, i.e. uniformly at random from $[\freqL, \freqU]\times [-\pi, \pi]$, and with equal weights $\weightPF = \frac{1}{\NP}\ones{\NP}$. 
The motion update required to obtain $\pminus_k$ can be realized with the particle filter by applying the target dynamics \eqref{eq:motion-dynamics-RP} independently to each particle. \cg{\margin{, reflecting any particles that cross the boundaries back into the domain $[\freqL, \freqU]$. }} 
The measurement update is performed by resampling all particles at each step according to the weights given by the likelihood function $w_i = \pd[\obs]{\pargen^i, \sensingpol}$ with the residual resampling method \cite{RD-OC:05}; after the resampling the weights are reset to $\weightPF = \frac{1}{\NP}\ones{\NP}$.

\subsection{Simulation of adaptive array scaling}
Here we compare in simulations the performance of the closed-loop between the Bayesian filter 
and the controllers described in Section~\ref{sec:characterization-controller}
in the sequential frequency estimation task~\eqref{eq:ArrayScalingMeasurement}. The resulting
adaptive strategies are compared with a fixed scaling, a linearly increasing scaling, and a random scaling.

The metric used to evaluate the estimation quality of the policies is the average of squared-errors over
$\Ntrial$ independent trials or executions of the algorithm at a given step $k$, where the ground truth $\freq_0$ is drawn randomly according to the initial belief $p_0(\theta_0)$ at the beginning of each trial,
\begin{align}\label{def:MSEg}
\emse(\freqhat_k) \define \frac{1}{\Ntrial} \sum_{n = 1}^{\Ntrial}((\freqhat_k)_n-(\freq_k)_n)^2 ,
\end{align}
and $\freqhat_k$ is the conditional mean estimator at the $k$-th step, $\freqhat_k = \meanPF_{\freq_k} = \partPF_{\freq_k}\weightPF$.
Further simulation specifications are the following. The target \gls{SNR} is fixed to $\SNR = -5$ dB and assumed known by the controller. To relax this condition a further dimension can be added to the particle filter and then the conditional mean estimate, or a conservative guess, can be used to evaluate the \gls{WWB}.
The sampling vector $\arrI\in \R^{\numobs}$ consists of $\numobs = 12$ uniformly spaced elements.
We employ a particle filter with $\NP = 10^4$ particles to represent the joint belief distribution of the frequency parameter $\freq$ and phase $\phase$. 
The functional dependence of the optimal scaling $\scaling = \scaling(\Delta \freq, \SNR)$ for the \gls{WWB} policies has been computed beforehand on a sufficiently fine grid (Fig.~\ref{figure:GsurfaceSNRpriorRPWWB}). The decision time of the controller is thus made negligible and it is suited for real-time applications. This computation speed is particularly beneficial to analyze the performance in simulations because we find that on the order of $10^4$ trials are required to obtain reproducible results for the empirical \gls{MSE} defined in~\eqref{def:MSEg}.

For the adaptive array scaling estimation task \eqref{eq:ArrayScalingMeasurement}, using as objective function \eqref{eq:rpwwbopt}, in principle the prior can be approximated by the empirical density of the particles.
However, computing the parameter integral via~\eqref{eq:intpriorsGen} for an arbitrary empirical density is expensive due to the large number of particles required for a good representation. For this reason, we approximate the belief distribution represented by the particles $(\partPF_{\pargen}, \weightPF)$ by a uniform or Gaussian distribution of judiciously chosen variance, e.g., in terms of the  empirical variance
\begin{align}\label{def:empiricalvariance}
\varPF[\freq] &= (\partPF[\freq] - \meanPF[\freq]\ones{\NP})^T \diag(\weightPF) (\partPF[\freq] - \meanPF[\freq]\ones{\NP}) .
\end{align}
%
%

We have observed that the estimation quality of the adaptive sensing policies based on Gaussian and uniform approximations of the empirical density given by the particles can benefit from choosing a larger (i.e., more conservative) variance for the controller input, $\sigma^2_\freq = \delta\cdot  \varPF[\freq]$, i.e., multiplying by a factor the variance of the particles in \eqref{def:empiricalvariance}. 
The choice of $\delta \ge 1$ that works well seems to depend on the \gls{SNR}: For \textit{high} \gls{SNR}, the resulting policies benefit from bigger (more conservative) values. This can be explained based on the abrupt increase of cost reported in  Fig.~\ref{figure:costsurfaceMerge} (left), which reflects the fact that at high SNR there is a possibility of abruptly introducing aliasing in the field of view.
%
%
For \textit{low} \gls{SNR}, as in the simulation with $\SNR = -5$ dB, choosing equal variances for the empirical and approximate distribution (i.e. $\delta = 1$) worked fine. This again can be due to the smaller sensitivity of the scaling with respect to the variance at low SNR as shown in Fig.~\ref{figure:costsurfaceMerge} (right) wherein the cost is dominated by sidelobes and not by grating lobes.

Fig.~\ref{figure:StrategiesScaling} shows one realization of scaling choices for each of the strategies, while
Fig.~\ref{figure:MSEcomparison} shows the empirical \gls{MSE} for $10^5$ trials of each of the strategies, confirming the benefit of adaptation strategies over \textit{ad hoc} policies without feedback. For the adaptive policies, we observe the influence of approximating the empirical density of the particles by a Gaussian or uniform prior, which can have a bigger impact than the \gls{SNR} modeling choice that distinguishes the \gls{RP} and \gls{UC} \gls{WWB}.

\begin{figure}[ht]

    \centering

    \includegraphics[width = 0.95\columnwidth]{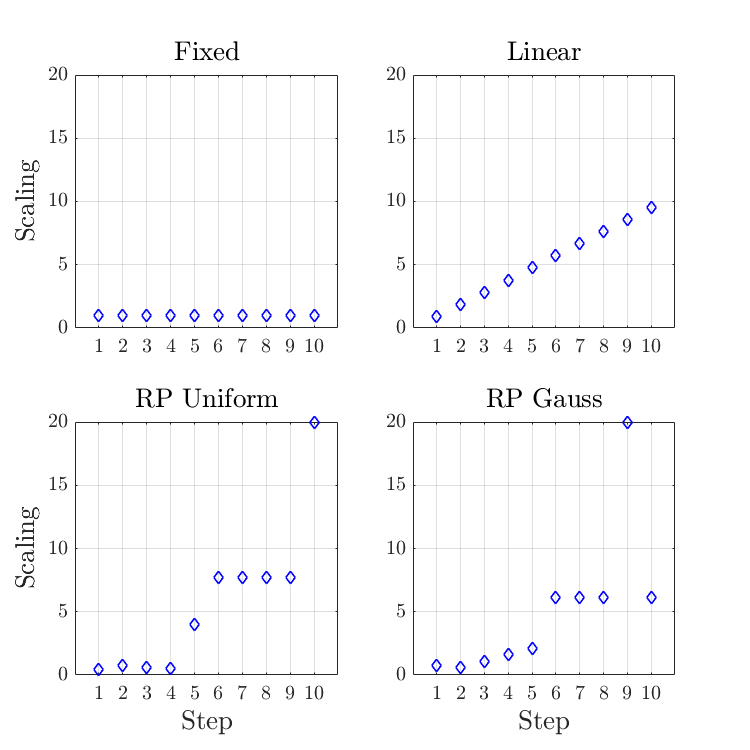}
    \caption{Array scaling over measurement steps for one trial. Top: fixed choice and linearly increasing scaling. Bottom: our adaptive algorithms based on the \gls{RP} \gls{WWB} where the \textit{priors} are given by the output of the particle filter approximated using uniform or Gaussian densities.}
    \label{figure:StrategiesScaling}

\end{figure}

\begin{figure}[ht]

    \centering

    \includegraphics[width = 1.0\columnwidth ]{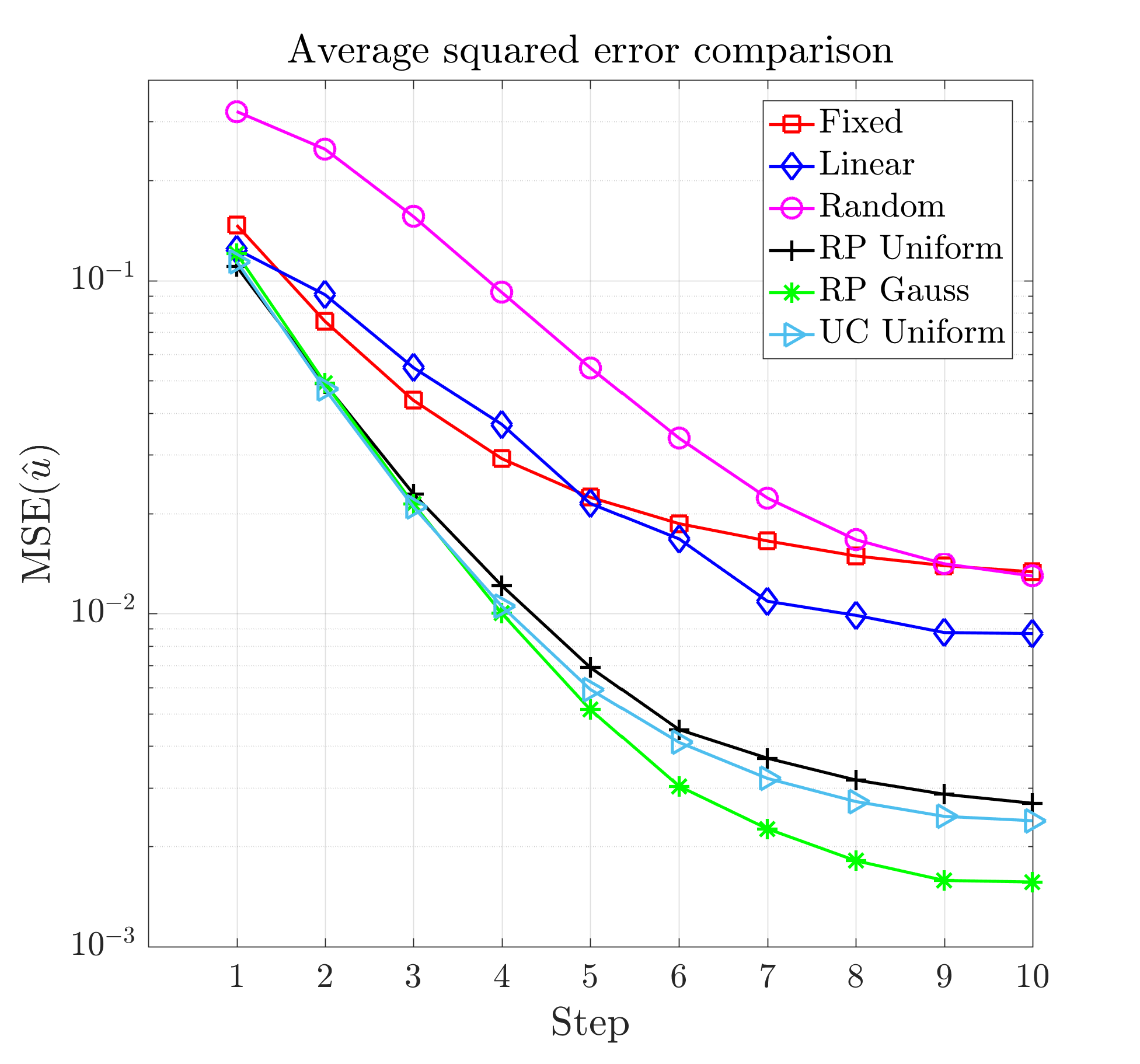}
    \caption{Comparison of \gls{MSE} at each step over $10^5$ independent trials of each policy for SNR $\SNR=-5$ dB. Note that a high number of trials is necessary for this metric to converge because each choice of sensing parameters depends on the filtered belief distribution from the previous step and thus on the unique history of previous choices.}
    \label{figure:MSEcomparison}
%
\end{figure}
We find it interesting to compare, in addition to the average $\emse(\freqhat_k)$, also the 
histogram of
the squared errors at each step, cf. Fig.~\ref{figure:MSE_histogram}.
%
It can be seen that the linear scaling strategy often produces estimates equally exact as the adaptive strategy, but is more prone to outliers. 
Conversely, the fixed scaling, which is more conservative, is equally well suited to avoid outliers as the adaptive strategy, but in the prevailing part of trials its estimates are less accurate. 

 \begin{figure}[ht]
        \includegraphics[width = 1.00\columnwidth]{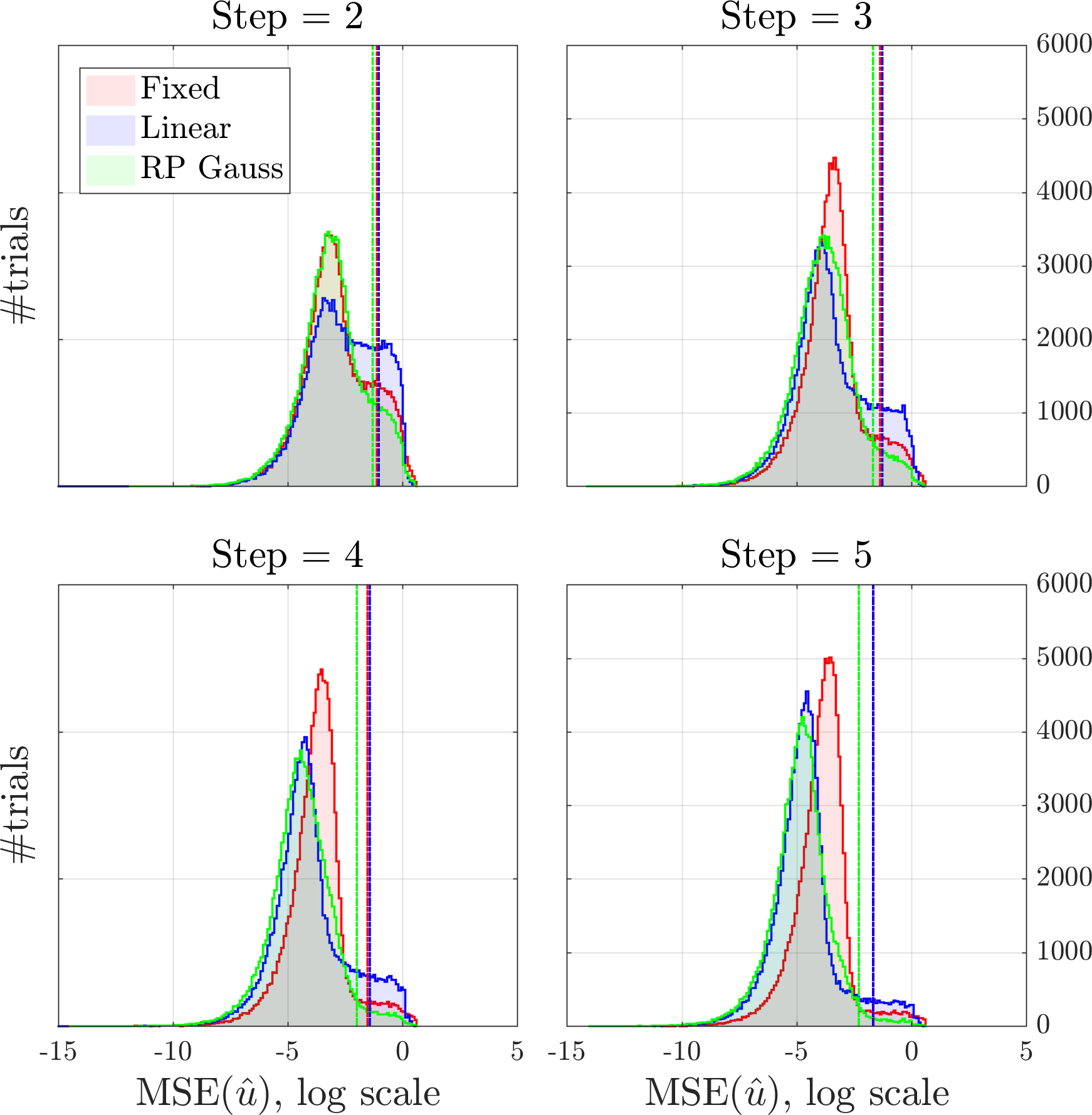}
    \caption{Histogram of errors at given steps of the trial of the algorithm for each policy. The vertical lines represent the \gls{MSE} over all trials at the given step. Note that at the beginning of each trial the ground truth is sampled randomly and therefore this metric resembles the empirical \gls{BMSE}.}
    \label{figure:MSE_histogram}

\end{figure}

\subsection{Simulation of adaptive channel selection}\label{sec:simulation-adaptive-channel-selection}
Here we simulate the controller performance for \gls{DoA} estimation in \gls{TDM} \gls{MIMO}~\eqref{def:TDMmodel} assuming that Doppler is known and equal to $0$. The controller needs to determine at each step the subset of transmitters and receivers that are active~\cite{JT-OI-IB:16}~\cite{DMN-MGH-RS-SB:17}. 

In contrast with the case of array or sampling scaling, where the sensing parameter is one-dimensional and can be computed off-line, stored in a look-up table and interpreted visually, the adaptation of antenna selection presents a number of discrete choices that grows exponentially with the number of available antennas. This has motivated us to train a neural network to predict the values of the evaluation of the tightest 
WWB
over test points in~\eqref{def:costwwb},~cf. Fig.~\ref{fig:adaptive-policy-diagram}. 

We have trained a fully connected neural network to approximate the \gls{KP} \gls{WWB} used in our previous work~\cite{DMN-MGH-RS-SB:17}. 
The concept is similar for the newly presented \gls{RP} \gls{WWB}.
 The input data comprises the antenna choices and the variance of the prior distribution, and the output is the optimal \gls{KP} \gls{WWB}~\eqref{def:costwwb}.
The choice of antennas is formatted using $1$-hot encoding of the virtual array elements that are active for a scenario where the available Tx and Rx elements are placed in a uniform grid $0.9 \until{8}$ in units of half-wavelength, and Tx $1$ and Rx $1$ are fixed. That is, at each step the controller chooses one transmitter and one receiver out of $7$ available.

For training, we have used the \textit{Tensorflow} library for \textit{Python}. For this small problem, the neural network is allowed to over-fit the training data because we have computed the \gls{WWB} in a sufficiently fine grid of variance values. 
The problem remains for the future to show the application of the closed-loop Bayesian adaptive framework in~Fig.~\ref{fig:adaptive-policy-diagram} to scenarios where the neural network learns to abstract relevant array properties based on limited training data.

\begin{figure}[ht]
	
	\centering

		\includegraphics[width = 0.95\columnwidth]{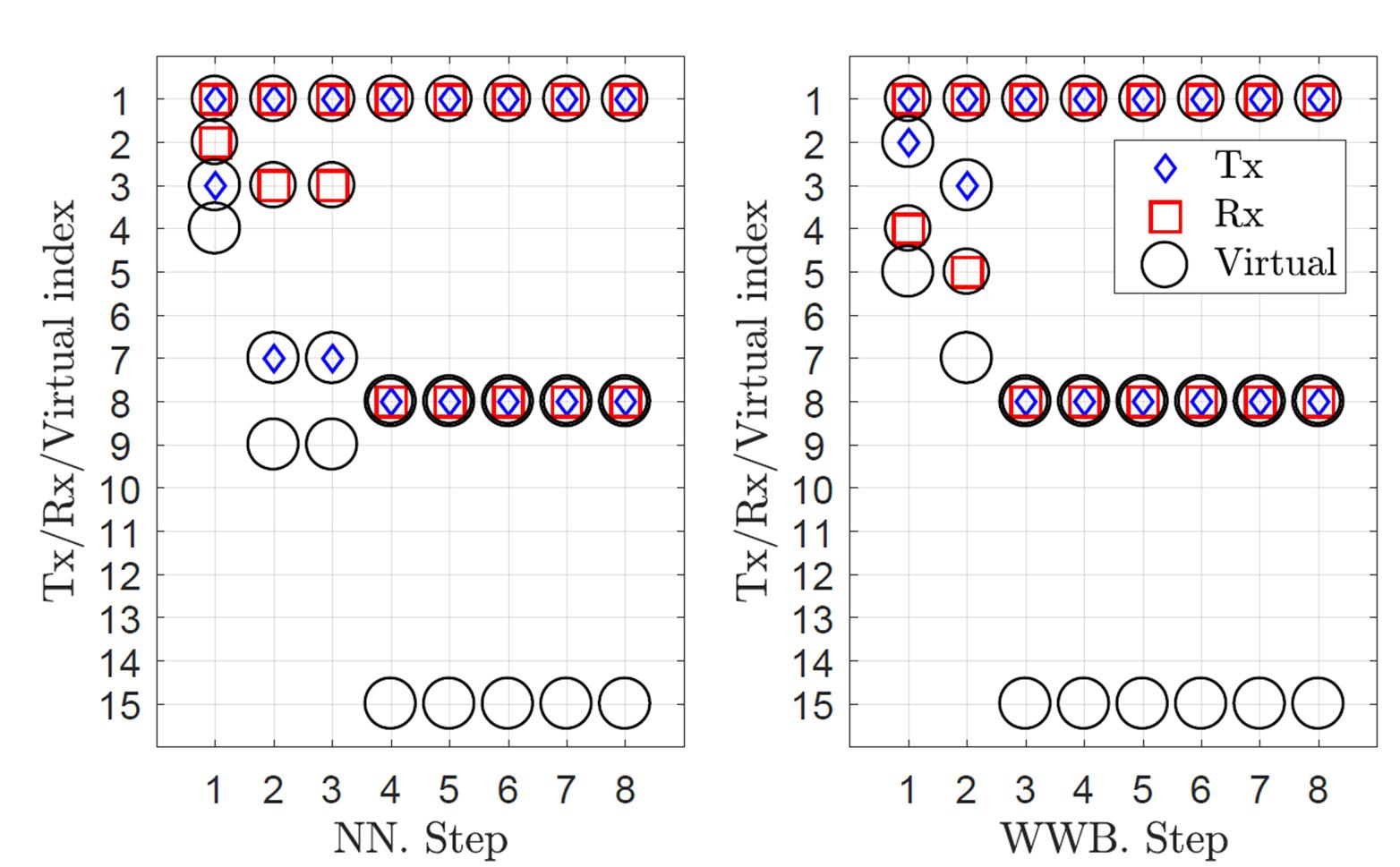}
	\caption{Channel selection at each measurement step for one trial of the policies defined by a fixed choice, the \gls{KP} \gls{WWB}, and a neural network that approximates the \gls{KP} \gls{WWB}. 
		The prior distribution is assumed Gaussian with variance equal to the variance of the distribution given by the particle filter. (Overlapping virtual elements are represented with concentric circles.)}
	\label{figure:choices-nn-wwb}

	
\end{figure}

\begin{figure}[ht]
	
	\centering

		\includegraphics[width = 0.95\columnwidth]{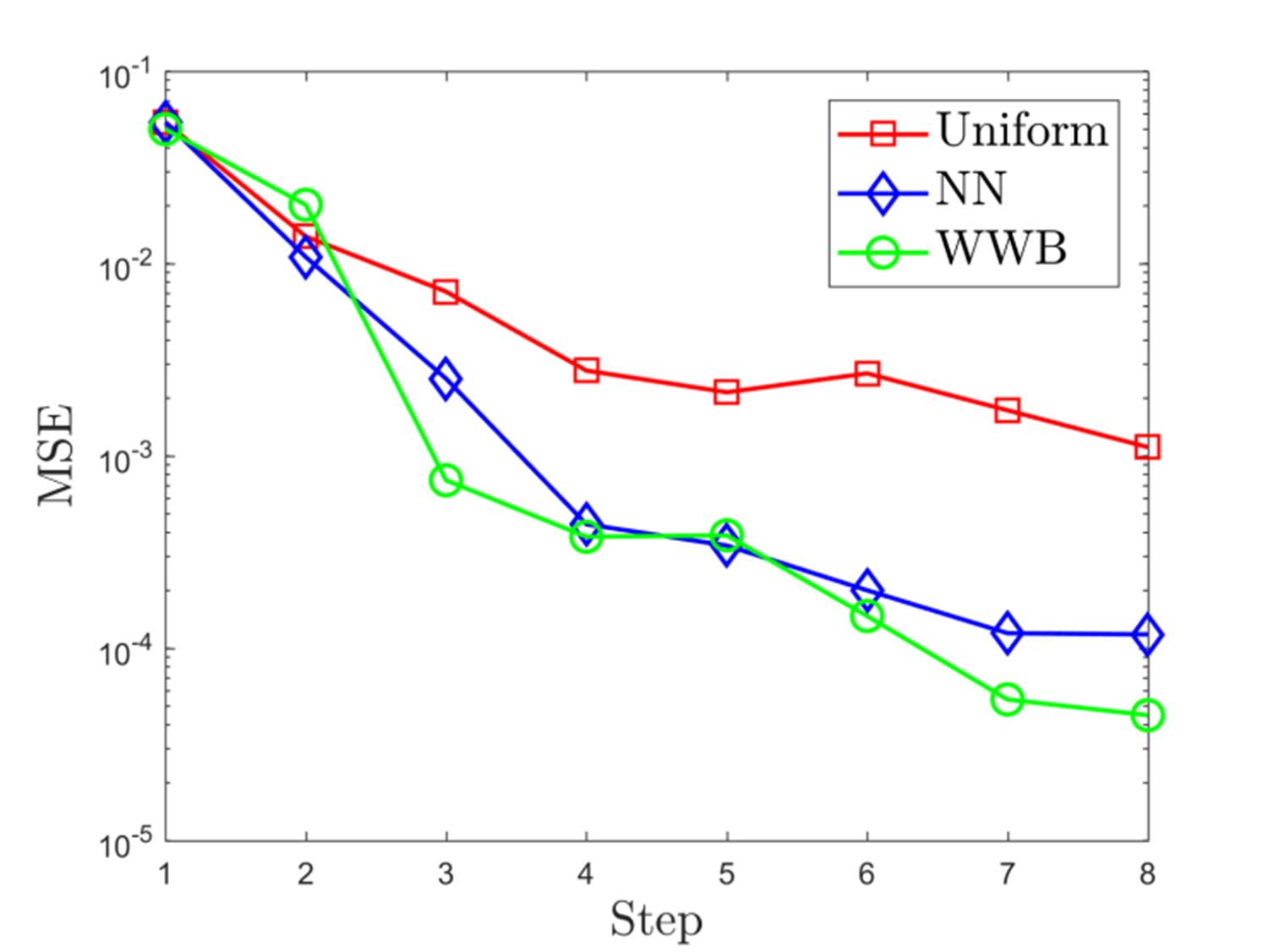}
	\caption{Comparison of \gls{MSE} for policies using the KP WWB~\cite{DTV-AR-RB-SM:14,DMN-MGH-RS-SB:17}, the associated neural network approximation, and the uniform MIMO array with Tx $\{1, 3\}$ and Rx $\{1,2\}$. The MSE is obtained at each step averaging over 300 realizations of the measurement.
	}
	\label{figure:mse-channel-choices}

\end{figure}

In Fig.~\ref{figure:choices-nn-wwb} we show that the antenna choices in a typical execution of the neural network resemble the ones of the exact \gls{WWB}, and Fig.~\ref{figure:mse-channel-choices} shows that the performance is similar.
From a computation standpoint, this example shows the practical side of the adaptive framework in Fig.~\ref{fig:adaptive-policy-diagram} based on previous work of the authors~\cite{DMN-MGH-RS-SB:17}.

%
%

%

\section{Conclusions}\label{sec:Conclusion}

We have studied frequency estimation tasks for radar arrays in the context of a Bayesian setting where adaptation of sampling vectors based on the \gls{WWB} prediction of estimation error is shown to be feasible for real-time implementations, at least at the software level, and provides a significant improvement of accuracy. From the ambiguity function standpoint, we have discussed the impact of incorporating knowledge of the phase or lack thereof in the model of the \gls{WWB}, as this modeling choice affects the characterization of aliasing.

We have derived the \glslink{WWB}{Weiss-Weinstein bound} for a generic multi-dimensional frequency estimation model for a single source with random initial phase, which can be efficiently implemented for uniform and Gaussian priors, and stated the optimization problem rigorously to obtain optimal sensing parameters. 
We have shown the applicability in two scenarios of 1D sequential frequency estimation, adapting, respectively, the scaling of sampling vector or \gls{PRF}, e.g., for Doppler estimation, and antenna selection for \gls{DoA} estimation. In the first case we have characterized the optimal controller choices of scaling parameter in terms of prior variance and \gls{SNR}. By storing the values in a look-up table, we can achieve real-time computation. Analogously, in the case of antenna selection for \gls{DoA} estimation, we have shown that a neural net trained off-line can over-fit the predictions of the \gls{WWB} for a given \gls{SNR}, suggesting that the evaluation of the optimal \gls{WWB} is feasible for real-time implementations.

Future work needs to address the bottleneck of the computational cost of the \gls{WWB} for empirical densities, e.g., given by a particle filter. This may be overcome with neural networks trained off-line using as input not the variance but a higher-detail representation of the densities. Through this means, one might obtain well adjusted sensing choices for a larger class of belief distributions than currently possible with Gaussian or uniform approximations of the empirical densities.
 We also envision applications of this framework to scenarios like channel selection in \gls{TDM} \gls{MIMO} for joint \gls{DoA} and Doppler estimation, and \gls{PRF} adaptation for \gls{GMTI} with colored noise. 
Quantitative guarantees on the benefits of adaptation are also an open problem, particularly to complement the need of numerical analysis that requires a large number of Monte Carlo realizations in multi-dimensional problems to extract conclusions about the average behavior of the closed-loop.

\section{Acknowledgments}
The authors would like to thank the German Ministry of Defense, particularly the WTD 81, for supporting this work. We also thank Prof. Joseph Tabrikian for helpful discussions on the topic during the German-Israeli exchange program TA44 and the anonymous reviewers that helped us improve the quality of the paper.



\appendix

\section{Appendices}\label{sec:Appendices}

\subsection{Background on the \glslink{WWB}{Weiss-Weinstein bound}}\label{sec:WWB-theory}

For convenience of the reader, we include here the general expression of the \gls{WWB} for Gaussian observations following~\cite{HLVT-KLB:13} and~\cite{DTV-AR-RB-SM:14}. These are the expressions that we explicitly evaluate in Section~\ref{sec:wwb-random-phase} for our array processing models with random initial phase in the case of uniform and Gaussian priors.
The parametric family of \glslink{WWB}{Weiss-Weinstein bounds}  $\wwb(\tpH)\in \R^{\numpar\times \numpar}$ for a data model comprising observations $\obs\in \Obsdom\subseteq\C^\numobs$ and random parameter vector $\pargen\in\R^\numpar$ depends on their joint probability distribution $p(\obs,\pargen)$ and is defined by
	\begin{align}\label{eq:def-wwb-parametric}
\wwb(\tpH)\define \tpH \Q^{-1} \tpH^T ,
	\end{align}
where the elements of the matrix $\Q\in\R^{M\times M}$ are given by \cite{DTV-AR-RB-SM:14}
	\begin{align}\label{eq:def-wwb-Q}
\Q_{k,l} \define \frac{\eta(\tph{k}, \tph{l})+\eta(-\tph{k}, -\tph{l})-\eta(\tph{k}, -\tph{l})-\eta(-\tph{k}, \tph{l})}  {\eta(\tph{k}, \bm{0})\eta(\bm{0}, \tph{l})} .
	\end{align}
The real-valued function $\eta$ is the expectation of scaled "likelihood ratios"  $l(\obs; \tilde{\pargen}, \pargen) \define \frac{\pd{\obs, \tilde{\pargen}}}{\pd{\obs, \pargen }}$ given by
	\begin{align}\label{eq:def-eta-general}
\eta(\tpgenI, \tpgenII) &= \expec{\pd{\obs,\pargen}} [l^\frac{1}{2}(\obs; \pargen+\tpgenI, \pargen)l^\frac{1}{2}(\obs; \pargen+\tpgenII, \pargen) ].
	\end{align}
It is related to the \textit{Bayesian Bhattacharyya coefficient}~\cite{FX-PG-GM-CFM:13} 
and quantifies the overlap between the shifted densities on the support of the unshifted density.
The matrix of test points has the shape $\tpH = \begin{pmatrix} \tph{1}, ..., \tph{M} \end{pmatrix}\in \R^{\numpar\times M}$
for some $M\ge 1$, although $M \geq q$ is recommended in \cite[seq. 4.4.1.4]{HLVT-KLB:13}. 
The domain of valid test points $\tpDom$ is restricted for practical purposes at least to matrices $\tpH$ satisfying 
	\begin{align}\label{eq:domain-condition}
	\tpDom\define \setdef{ \tpH\in\R^{\numpar\times M} }
	 {\Pargen\cap( \Pargen + \tph{m}) \neq \emptyset, \forall m }
	\end{align}
where $	\Pargen \define \operatorname{supp}(\pd{\pargen})=\setdef{\pargen\in\R}{\pd{\pargen}>0}$
denotes the support of the prior. 
Note that if the intersection of supports in~\eqref{eq:domain-condition} was empty for one $i$, then $\eta(\tph{i}, \bm{0}) = 0$ and therefore $\Q^{-1}$ cannot be computed since the $i$-th row/column of $\Q$ is not defined.

In practice, the joint probability distribution is decomposed as $p(\obs, \pargen) = p(\obs\vert \pargen) p(\pargen)$, because the likelihood function $p(\obs\vert\pargen)$, denoting the probability of the observation $\obs$ given the parameter vector $\pargen$, and the prior probability distribution $p(\pargen)$, can be modeled more naturally. 
With regard to equation~\eqref{eq:def-eta-general}, we find
\begin{align}\label{eq:eta-general}
\eta(\tpgenI, \tpgenII) 
&= \int_{\Pargen} \acute{\eta}_{\pargen}(\tpgenI, \tpgenII) p^\frac{1}{2}(\pargen+\tpgenI)p^\frac{1}{2}(\pargen+\tpgenII)d\pargen
,
\end{align}
where $\acute{\eta}_{\pargen} (\tpgenI, \tpgenII) = \int_{\Obsdom} p^\frac{1}{2}(\obs\vert \pargen+\tpgenI)p^\frac{1}{2}(\obs\vert \pargen+\tpgenII) d\obs.$\\

Up to this point the formulation of the \gls{WWB} applies to general probability distributions of vector parameters and vector observations. Now we consider likelihood functions corresponding to Gaussian observation models parametrized by the mean, as for the \textit{conditional model} described in~\cite{DTV-AR-RB-SM:14}, where $\obs \sim \Gaussian{\C}( \gmodel(\pargen), \noisecov)$
%
%
with a known noise covariance matrix $\noisecov$. The authors of \cite[eq. (15)]{DTV-AR-RB-SM:14} offer the following analytic expression for the integration of likelihoods over observation space
\begin{align}\label{eq:def-etaacute-gaussian}
\log \acute{\eta}_{\pargen}(\tpgenI, \tpgenII) = -\frac{1}{4} \Vert \noisecov^{-1/2}(\gmodel(\pargen+\tpgenI) - \gmodel(\pargen+\tpgenII))\Vert^2
\end{align}
which is obtained after using the parallelogram law to the terms that remain after a null addition trick to complete the Gaussian integral.

\subsection{Integral $\intprior$ for  uniform and Gaussian priors }
\label{sec:IntegralOverPriors} 

Here we give explicit formulas for the \gls{WWB} 
in~\eqref{eq:randomphase-WWB-general}
for Gaussian and uniform priors providing expressions for~\eqref{eq:intpriorsGen}. These priors can be applied to design problems, such as array design, where the parameter of interest is assumed in a given interval or \glslink{FoV}{Field-of-View} \cite{MGH-DMN-CG-RS:18}. In this work, we use them in our Bayesian adaptive algorithm to approximate the outcome of the particle filter and accelerate the computations of the controller.

\subsubsection{Uniform belief distribution}\label{sec:rpwwbuniform}
Consider a uniform belief distribution with support $\Pargen\subset \R^{\numpar}$ for the parameter of interest, $\pargen \sim \Uniform{\Pargen}$, i.e., $\pd{\pargen} = \frac{1}{\vol{\Pargen}} \cf{\Pargen}(\pargen).$
We restrict our analysis to independent parameters. This implies a rectangular support
\begin{align}\label{eq:boxdomain}
\Pargen = (\Freqmean + \bigtimes_j [-\frac{\Delta \freq_j}{2}, \frac{\Delta \freq_j}{2}])\times [-\pi, \pi] 
\enifed [\pargenL, \pargenU]
\end{align}
of volume $\vol{\Pargen} = 2\pi\prod_j \Delta \freq_j$ with edge lengths $\Delta \Freq\in \R^{\numpar-1}$ and covariance $\Freqvar = \diag([\frac{\Delta \Freq^2}{12}])$.\\
Using~\eqref{eq:boxdomain}, the integral over priors $\intprior$ in \eqref{eq:intpriorsGen} takes the form
\begin{align}\label{intpriorsUniform}
\intprior(\tpgenI, \tpgenII) &= \frac{ \vol{\Pargenis(\tpgenI, \tpgenII)} }{\vol{\Pargen}} , 
\end{align}
where the volume $\vol{\Pargenis(\tpgenI, \tpgenII)}$  of the shifted-support intersection 
\begin{align*}
\Pargenis(\tpgenI, \tpgenII) &\define \Pargen\cap (\Pargen-\tpgenI)\cap (\Pargen-\tpgenII)\\
&= (\max(\pargenL, \pargenL-\tpgenI, \pargenL-\tpgenII), \min(\pargenU, \pargenU-\tpgenI, \pargenU-\tpgenII)) 
\end{align*}
can be readily seen to be
\begin{align}\label{eq:volumeThetatilde}
\!\!\vol{\Pargenis(\tpgenI,\tpgenII)} &\!=\! \prod_j \ramp([\pargenU-\pargenL + \min(\bm{0}, -\tpgenI, -\tpgenII) - \max(\bm{0}, -\tpgenI, -\tpgenII)]_j)
\notag
\\
& \!=\! \prod_j \ramp([\pargenU-\pargenL -\frac{1}{2}(\vert \tpgenI-\tpgenII \vert+\vert \tpgenI \vert+\vert \tpgenII \vert)]_j)
\end{align}
where $\ramp(x)\define\max(0,x)$ is the ramp function, i.e. the product $\vol{\Pargenis(\tpgenI,\tpgenII)}$ must be set to zero if one of the factors is negative. 
\cg{Conditions for this on $(\tpgenI,\tpgenII)$ are ... }
We thus find the expression
\begin{align}\label{eq:rpwwbUniform}
\wwb(\tphI) &= \frac{\tphI \tphI^T}{2\vert \Pargen\vert}  \frac{\acute{\eta}(\tphI,\bm{0})^2\vert \tilde{\Pargen}(\tphI)\vert^2}{\vert \tilde{\Pargen}(\tphI)\vert-\acute{\eta}(\tphI, -\tphI)\vert \tilde{\Pargen}(\tphI, -\tphI)\vert}
\end{align}
with 
\begin{subequations}
\begin{align}
\label{eq:eta-cute1-gaussian}
\acute{\eta}(\tphI, \bm{0}) &= \exp(-\frac{\gamma}{2}(\numobs- \real{\ones{\numobs}^Te^{i\arrex \tphI}}))
\\ 
\label{eq:eta-cute2-gaussian}
\acute{\eta}(\tphI, -\tphI) &= \exp(-\frac{\gamma}{2}(\numobs- \real{\ones{\numobs}^Te^{i2\arrex \tphI}}))
\\
\notag
\vol{\Pargenis(\tphI)}  &= (2\pi-\vert \tphI[\phase]\vert)\prod_j (\Delta \freq_j - \vert \tphI[\freq_j] \vert)\\ \notag
\vol{\Pargenis(\tphI, -\tphI)} &= \max(0,2\pi-2\vert \tphI[\phase] \vert)\prod_j \max(0, \Delta \freq_j - 2\vert \tphI[\freq_j]\vert) .
\end{align}
\end{subequations}

Depending on the shape of the support $\Pargen$, the function $\intprior$ can exhibit certain symmetries.
For the case of our rectangular domain \eqref{eq:boxdomain}, we easily observe $\intprior(\tpgenI, \tpgenII) = \intprior(-\tpgenI, -\tpgenII)$ from the representation in~\eqref{eq:volumeThetatilde}.
As noticed in Remark \ref{rem:symmetryWWB}, the optimization in \eqref{eq:rpwwbopt} can be performed for 
$\tphI \in (\bigtimes_j[-\Delta \freq_j, \Delta \freq_j])\times [0, 2\pi]$.

\cg{
\begin{remark}
\item Note that  $ \vol{\Pargenis(\tphI, -\tphI)} = \vol{\Pargenis(2\tphI)}$ for the rectangular support $\Pargen$ of~\eqref{eq:boxdomain}. 
While this fails to hold for general domains (consider a set of two points), it is true at least for convex domains, which is evident from the fact that $(\Pargen+\tphI)\cap (\Pargen- \tphI) \subset \Pargen$ and the translation invariance of the Lebesgue volume. \\
\end{remark}
}

\subsubsection{Gaussian belief distribution}\label{sec:rpwwbgaussian}

Consider a Gaussian belief distribution for the frequency parameter $\Freq \sim \Gaussian{\R}(\Freqmean, \Freqvar)$, which is independent of the uniformly distributed phase $\phase$, i.e.
\begin{align*}
\pd{\pargen} &= \pd{\Freq}\pd{\phase}\\
\pd{\Freq} &= \frac{1}{\sqrt{(2\pi)^{\numpar-1}\det(\Freqvar)}} \exp(-\frac{1}{2} \Vert \Freq -\Freqmean \Vert_{\Freqvar^{-1}})\\
\pd{\phase} &= \frac{1}{2\pi} \cf{[-\pi, \pi]}(\phase) .
\end{align*}
Denoting $\Phase \define [-\pi, \pi]$, the integral over priors $\intprior$ in~\eqref{eq:intpriorsGen} becomes
\begin{align}\label{intpriorsGaussian}
\intprior(\tpgenI, \tpgenII) &= \BC( \tpgenI_{\Freq}, \tpgenII_{\Freq} ) \cdot  \frac{1}{2\pi} \vol{\Phaseis(\tpgenI_{\phase}, \tpgenII_{\phase})}\\ \notag
\BC( \tpgenI_{\Freq}, \tpgenII_{\Freq} ) &= \exp(-\frac{1}{8}\Vert \tpgenI_{\Freq} - \tpgenII_{\Freq} \Vert^2_{\Freqvar^{-1}})\\ \notag
\vol{\Phaseis(\tpgenI_{\phase}, \tpgenII_{\phase})} & = \vert \Phase\cap(\Phase-\tpgenI_{\phase})\cap(\Phase-\tpgenII_{\phase}) \vert\\ \notag
	%
	%
&=  \max(0, 2\pi -\frac{1}{2}(\vert \tpgenI_{\phase}-\tpgenII_{\phase}\vert + \vert\tpgenII_{\phase} \vert +\vert\tpgenI_{\phase} \vert))
\end{align}
where we used the expression for the Bhattacharyya coefficient for the case of two Gaussians with same variance but different means~\cite[eq. (61)]{TK67}). (The derivation follows along the same lines as the derivation of the expression for the complex Gaussian likelihood integral in~\eqref{eq:def-etaacute-gaussian}.)
We thus find the expression
\begin{align}\label{eq:rpwwbGaussian}
\wwb(\tphI) &= \\   \notag
 \frac{\tphI \tphI^T}{2(2\pi)}  & \frac{\acute{\eta}(\tphI,\bm{0})^2   (\BC(\tphI_{\Freq},\bm{0})\vert \Phaseis(\tphI_{\phase}) \vert)^2}{\vert \Phaseis(\tphI_{\phase}) \vert - \acute{\eta}(\tphI, -\tphI)  \BC(\tphI_{\Freq},-\tphI_{\Freq}) \vert \Phaseis(\tphI_{\phase}, -\tphI_{\phase}) \vert}
\end{align}
with $\acute{\eta}$ as
 in~\eqref{eq:eta-cute1-gaussian},~\eqref{eq:eta-cute2-gaussian}, and
\begin{align}
\notag
\vert \Phaseis(\tphI_{\phase}) \vert &= 2\pi - \vert \tphI_{\phase}\vert\\ \notag
\vert \Phaseis(\tphI_{\phase}, -\tphI_{\phase}) \vert &= \max(0, 2\pi - 2\vert \tphI_{\phase}\vert)\\\notag
\BC(\tphI_{\Freq},\bm{0}) &= \exp(-\frac{1}{8}\Vert \tphI_{\Freq}\Vert_{\Freqvar^{-1}}^2)\\\notag
\BC(\tphI_{\Freq},-\tphI_{\Freq}) &= \exp(-\frac{1}{2}\Vert \tphI_{\Freq}\Vert_{\Freqvar^{-1}}^2) .
\end{align}
The optimization~\eqref{eq:rpwwbopt} is for $\tphI \in \R^{\numpar-1}\times [0, 2\pi]$ 
since the symmetry $\intprior(\tpgenI, \tpgenII) = \intprior(-\tpgenI, -\tpgenII)$ noticed in Remark \ref{rem:symmetryWWB} is evident from~\eqref{intpriorsGaussian}.

\subsection{Background on the unconditional WWB}\label{sec:UnconditionalWWB}
For convenience of the reader we include the unconditional \gls{WWB} \cite[eq. (56)]{DTV-AR-RB-SM:14}. 
It is based on the model $\obs = \sv(\freq)\amplitude + \noise \in \C^\numobs$, where $\noise$ is standard complex Gaussian noise, the steering vector is $\sv(\freq) = e^{i\arrI\freq}$ and the complex amplitude $\amplitude\sim \Gaussian{\C}(0, \SNRUC)$ is also a Gaussian random variable, i.e. $\magnitude^2\sim \frac{\SNRUC}{2}\chi^2_2 = \mathrm{Exp}(\SNRUC)$ has an exponential distribution. Note that the notion of SNR according to the \textit{\gls{KP}} and \textit{\gls{RP}} models, denoted as $\gamma$, is related to $\SNRUC$ by $\expec{}[\magnitude^2]=\expec{}[\SNR] = \SNRUC$. 
When the belief distribution on $\freq$ is a uniform prior of length $\Delta \freq$, the corresponding WWB reads
\begin{align}
 &\rm{UWWB} (h) = \frac{h^2}{2\Delta \freq}\times
 \\
  \notag
 &
 \frac{(\Delta \freq - \vert h\vert)^2 (1+\frac{\USNR}{4} (N^2-\vert 1_N^T e^{idh}\vert^2))^{-2} }
 {(\Delta u - \vert h\vert) - \max(0, \Delta u - 2\vert h\vert) (1+\frac{\USNR}{4}(N^2-\vert 1_N^T e^{i2dh} \vert^2))^{-1} }
\end{align}
where $\USNR = \frac{\SNRUC^2}{\numobs\SNRUC+1}.$ Inspiration for this formula came from eq. \cite[eq. (56)]{DTV-AR-RB-SM:14}, which is the special case for $\Delta u = 2$.
The ramp function $\ramp(x)=\max(0,x)$ is required when optimization is performed over $\tpDom = [-\Delta u, \Delta u]$. 
Due to the symmetry $h\to -h$, the optimization can be restricted to $[0, \Delta u]$. 

\subsection{Bound on \gls{BMSE} conditioned to previous history}\label{sec:proof-all-previous-measurements}

Here we prove Proposition~\ref{cor:CBMSE}. 
We restate part i) in the following Lemma where we spell out the assumed probability dependencies that hold for our observation and transition models.

\begin{lemma}[Motion and measurement updates under sequence of sensing parameters]\label{lemma:updates-identities}
	
	Let the following assumptions be satisfied
	\begin{itemize}\setlength\itemsep{0.1cm}
		\item [(i)] State independence of previous measurements, i.e. 
		\begin{align}\label{def:StateEvolutionMarkovProperty}
		\pd[\pargenk]{\pargenkm, \sensingpolk} =\pd[\pargenk]{\pargenkm,\obskmall, \sensingpolkall}
		\end{align}
		\item [(ii)] Conditional independence to next sensing parameter, i.e.
		\begin{align*}
		\pd[\pargenkm]{\obskmall,\sensingpolkmall}
		=
		\pd[\pargenkm]{\obskmall,\sensingpolkall} .
		\end{align*}
		
		\item [(iii)] $\obsk$ is independent of $\sensingpolkmall$, $\obskmall$ given $\pargenk$ and $\sensingpolk$, i.e., 
		\begin{align}\label{eq:indepentend-measurements}
		\pd[\obsk]{\pargenk,\sensingpolk} =\pd[\obsk]{\pargenk,\obskmall, \sensingpolkmall, \sensingpolk}.
		\end{align}
	\end{itemize}
Then the recurrences for the motion and measurement updates~\eqref{eq:motion-update-general} and~\eqref{eq:measurement-update-general} 
satisfy~\eqref{eq:updates-identities}.
\end{lemma}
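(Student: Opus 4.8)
The plan is to establish the three identities in~\eqref{eq:updates-identities} simultaneously by induction on the measurement index $k$. The base case is the modeling convention $\pzeroplus(\pargen_0) = p_0(\pargen_0)$, which is precisely the posterior conditioned on no data. For the inductive step I assume that the measurement-update identity~\eqref{eq:updates-identities-2} holds at step $k-1$, i.e.\ $\pplusatkm(\pargenkm) = \pd[\pargenkm]{\obskmall, \sensingpolkmall}$, and I first derive the motion-update identity~\eqref{eq:updates-identities-1}, then the measurement-update identity~\eqref{eq:updates-identities-2} together with the normalization constant~\eqref{eq:updates-identities-3}.

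For the motion update I substitute the inductive hypothesis into the Chapman--Kolmogorov recurrence~\eqref{eq:motion-update-general} and make two conditioning-set manipulations. First, assumption (ii) lets me append the next sensing parameter to the prior, replacing $\pd[\pargenkm]{\obskmall, \sensingpolkmall}$ by $\pd[\pargenkm]{\obskmall, \sensingpolkall}$; second, assumption (i), i.e.~\eqref{def:StateEvolutionMarkovProperty}, lets me append the past measurements to the transition kernel, replacing $\pd[\pargenk]{\pargenkm, \sensingpolk}$ by $\pd[\pargenk]{\pargenkm, \obskmall, \sensingpolkall}$. The integrand then becomes the product $\pd[\pargenk]{\pargenkm, \obskmall, \sensingpolkall}\,\pd[\pargenkm]{\obskmall, \sensingpolkall} = \pd[\pargenk, \pargenkm]{\obskmall, \sensingpolkall}$, and integrating out $\pargenkm$ yields $\pd[\pargenk]{\obskmall, \sensingpolkall}$, which is~\eqref{eq:updates-identities-1}.

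For the measurement update I insert the freshly proven~\eqref{eq:updates-identities-1} into~\eqref{eq:measurement-update-general} and invoke assumption (iii), i.e.~\eqref{eq:indepentend-measurements}, to append $\obskmall, \sensingpolkmall$ to the likelihood, replacing $\pd[\obsk]{\pargenk, \sensingpolk}$ by $\pd[\obsk]{\pargenk, \obskmall, \sensingpolkall}$ (recall $\sensingpolkall = \{\sensingpolkmall, \sensingpolk\}$). The product $\pd[\obsk]{\pargenk, \obskmall, \sensingpolkall}\,\pd[\pargenk]{\obskmall, \sensingpolkall}$ collapses by the chain rule to the joint $\pd[\obsk, \pargenk]{\obskmall, \sensingpolkall}$, and Bayes' rule refactors this as $\pd[\pargenk]{\obskall, \sensingpolkall}\,\pd[\obsk]{\obskmall, \sensingpolkall}$, using $\{\obskmall, \obsk\} = \obskall$. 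Since $\pplusatk$ integrates to one in $\pargenk$ while the first factor is already a density in $\pargenk$, the constant $c$ must cancel the $\pargenk$-independent second factor, forcing $c = 1/\pd[\obsk]{\obskmall, \sensingpolkall}$ (identity~\eqref{eq:updates-identities-3}) and leaving $\pplusatk(\pargenk) = \pd[\pargenk]{\obskall, \sensingpolkall}$ (identity~\eqref{eq:updates-identities-2}), closing the induction.

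The algebra above is routine; the main obstacle is conceptual and lies in the two substitutions that touch the \emph{future} sensing parameter $\sensingpolk$, namely assumptions (i) and (ii). Because the closed loop chooses $\sensingpolk$ as a deterministic function of the previous posterior, hence of $(\obskmall, \sensingpolkmall)$, conditioning on $\sensingpolk$ must not reveal anything about $\pargenkm$ or the dynamics beyond the past history. I would therefore treat (i)--(iii) as explicit hypotheses on the data model (as the statement does) rather than attempt to prove them in general, and I would close the argument by verifying that the observation and transition models of Section~\ref{sec:wwb-and-models-random-phase-array-processing} satisfy these conditional-independence properties, so that Proposition~\ref{cor:CBMSE}(i) legitimately applies to the closed loop.
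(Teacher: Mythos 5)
Your proposal is correct and follows essentially the same route as the paper's proof: induction on $k$ with the base case given by the convention $\pzeroplus = p_0$, assumptions (i) and (ii) used to turn the Chapman--Kolmogorov integrand into a joint density conditioned on $(\obskmall,\sensingpolkall)$, and assumption (iii) plus Bayes' rule and the normalization of $\pplusatk$ to extract both the posterior identity and the value of $c$. The only differences are cosmetic (the order in which (i) and (ii) are invoked, and your closing remark that (ii) must be verified for the closed loop, which the paper defers to Remark~\ref{remark:policy-independent-theta}).
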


\begin{proof}
	We carry out the proof by complete induction. We can see the assertions to hold for $k=0$ by definition of $\pzeroplus$. Now suppose it is true for $k-1$.
	To show~\eqref{eq:updates-identities-1}, we note that
	\begin{align*}
	& \;\;\;\;\int \pd[\pargenk]{\pargenkm, \sensingpolk}  \;\pplusatkm(\pargenkm) d\pargenkm
	\\
	&= \int \pd[\pargenk]{\pargenkm,\obskmall, \sensingpolkall}   \;\pd[\pargenkm]{\obskmall, \sensingpolkmall} d\pargenkm
	\\
	&= \int \pd[\pargenk,\pargenkm]{\obskmall, \sensingpolkall }  d\pargenkm 
	= \pd[\pargenk]{ \obskmall,\sensingpolkall} .
	\end{align*}
	In the first step we use (i) and the hypotheses of induction for $\pplusatkm(\pargenkm)$; afterwards we use assumption (ii) and standard properties of probabilities. 
	To show~\eqref{eq:updates-identities-2}, we note that
	\begin{align*}
	\pplusatk(\pargenk)  
	&= c\; \pd[\obsk]{\pargenk, \sensingpolk} \pd[\pargenk]{ \obskmall,\sensingpolkall} 
	\\
	&= c\; \pd[\obsk]{\pargenk,\obskmall,\sensingpolkall }
	\pd[\pargenk]{ \obskmall,\sensingpolkall} 
	\\
	&= \tilde{c}\; 
	\pd[\pargenk]{\obsk, \obskmall,\sensingpolkall}
	\end{align*}
	where in the 2nd step we have used assumption (iii), and
	in the last step we have used Bayes rule applied to the probability $\tilde{p}(\obsk\vert \pargenk)\define\pd[\obsk]{\pargenk,\obskmall,\sensingpolkall }$, resulting in  $\tilde{c} = c\cdot \pd[\obsk]{\obskmall,\sensingpolkall}.$
Since the normalizing constant $c$ is chosen so that $\pplusatk$ is a probability density, it is $\tilde{c} = 1$, and
we obtain the value for $c$ in~\eqref{eq:updates-identities-3}.
\end{proof}

Next we observe that the sensing parameters optimized according to~\eqref{def:costwwb} satisfy condition (ii) in~Lemma~\ref{lemma:updates-identities}.

\begin{remark}[]\label{remark:policy-independent-theta}
	Consider the selection of sensing parameters according to~\eqref{def:costwwb} and \eqref{eq:opt-sensing}. Then, for $k\ge1$, it holds that
	\begin{align*}
	\pd[\pargenkm]{\obskmall,\sensingpolkmall}
	=
	\pd[\pargenkm]{\obskmall,\sensingpolkall},
	\end{align*}
	where for $k=1$ it us understood that $p(\pargen_0)
	=
	\pd[\pargen_0]{\sensingpol_1}$.\\
	
	This follows from the fact that $\sensingpolk$ is computed in a deterministic manner in~\eqref{eq:opt-sensing} from the previous observations $\obskmall$ and sensing parameters $\sensingpolkmall$ (requiring in addition only the initial belief $p(\pargen_0)=p_0(\pargen_0)$ and the transition and measurement models to carry out the recurrences~\eqref{eq:motion-update-general} and~\eqref{eq:measurement-update-general}).
	%
	As such, $\sensingpolk$ is independent of every other random variable conditioned to $\obskmall$ and $\sensingpolkmall$, and is in particular independent of $\pargenkm$.
\end{remark}

Using the previous results we can provide the proof of Proposition~\ref{cor:CBMSE}. Proof of i) follows from Lemma~\ref{lemma:updates-identities}. 
\begin{proof}[Proof of ii) in Proposition \ref{cor:CBMSE}]
	Follows using the identity in~\eqref{eq:updates-identities-1} for $\pminusatk(\pargenk)$ 
	 (see Lemma~\ref{lemma:updates-identities}) in inequality~\eqref{eq:ineq-bmse-wwb-general}. 
	Note that the assumptions (i) and (iii)
	of Lemma~\ref{lemma:updates-identities}
	are satisfied for the observation and transition models considered, and condition (ii)
	is verified in Remark~\ref{remark:policy-independent-theta}. 
\end{proof}


%

\end{document}